\documentclass[pdflatex,sn-mathphys-num]{sn-jnl}
\usepackage{graphicx}%
\usepackage{multirow}%
\usepackage{mathrsfs}%
\usepackage[title]{appendix}%
\usepackage{xcolor}%
\usepackage{textcomp}%
\usepackage{manyfoot}%
\usepackage{booktabs}%
\usepackage{algorithm}%
\usepackage{algorithmicx}%
\usepackage{algpseudocode}%
\usepackage{listings}%
\theoremstyle{thmstyleone}%
\newtheorem{theorem}{Theorem}%
\newtheorem{proposition}[theorem]{Proposition}%
\theoremstyle{thmstyletwo}%
\newtheorem{remark}{Remark}%
\theoremstyle{thmstylethree}%
\usepackage{amsmath,amsthm,amsfonts,amssymb,amscd}
\usepackage{mathtools}
\usepackage[thinc]{esdiff}

\usepackage{graphicx}
\usepackage{a4wide}
\usepackage{hyperref}
\usepackage{lipsum}
\usepackage{fancyhdr}
\usepackage{xcolor}
\usepackage{subfigure}
\makeatletter
\def\@seccntformat#1{\@ifundefined{#1@cntformat}%
   {\csname the#1\endcsname\quad}  
   {\csname #1@cntformat\endcsname}
}
\let\oldappendix\appendix 
\renewcommand\appendix{%
    \oldappendix
    \newcommand{\section@cntformat}{\appendixname~\thesection\quad}
}
\makeatother

\begin{document}
	
	\title[Nonlinear Feedbacks Between Host Behavior and Vector Adaptation in a Multi-Host Vector-Borne Disease Model]{Nonlinear Feedbacks Between Host Behavior and Vector Adaptation in a Multi-Host Vector-Borne Disease Model}
	\author[1]{\fnm{Shravani} \sur{Shetgaonkar}}\email{shetgaonkarshravani@gmail.com}
	
	\author*[]{\fnm{Anupama} \sur{Sharma}$^*$}\email{anupamas@goa.bits-pilani.ac.in}
	
	\affil[]{\orgdiv{Department of Mathematics}, \orgname{Birla Institute of Technology and Science, Pilani, K K Birla Goa Campus}, \orgaddress{\street{Zuarinagar}, \city{Sancoale}, \postcode{403726}, \state{Goa}, \country{India}}}

\abstract{
Insecticide-treated nets (ITN) are an effective and low-cost intervention for controlling vector-borne disease (VBD), however, their use depends on individual decisions based on perceived cost and risk of infection. This study investigates a nonlinear multi-host model for the transmission of VBD with endogenous strategic control. We assume that hosts' adoption of ITN emerges from the payoff-based decision-making, creating a nonlinear coupling with disease prevalence. We model vector preference as a function of ITN coverage to probe the complex interplay among individual choices, disease prevalence, and its control in a multi-host setting. The qualitative behavior of the system is characterized by the thresholds $R_0$ and $R_c$, which determine the existence and local stability of the disease-free and endemic equilibria. The system exhibits rich dynamical behavior; hence, we provide a bifurcation analysis identifying the conditions for saddle-node and Hopf bifurcations. Our results demonstrate that the interaction between the perceived cost of ITN and the infection risk can induce critical transitions, including regime shift from stable endemic states to sustained periodic oscillations. Furthermore, we identify a counterintuitive effect whereby complete ITN adoption by the primary host can increase the overall prevalence in the secondary host due to adaptive shifts of vector feeding behavior.}
\keywords{Vector borne disease,  Evolutionary game theory, Insecticide-treated nets, Bifurcation}
	
	\maketitle
	
		\section{Introduction}\label{sec1}
	
Vector-borne disease (VBD) dynamics are shaped not only by epidemiological processes but also by human behavior and vector biting patterns. Insecticide-treated nets (ITNs) are among the most effective interventions for reducing their transmission~\cite{mukhtar2018modelling}. However, the effectiveness of ITN depends on human compliance, which is shaped by discomfort, economic costs, and risk perception~\cite{oraby2025insecticide}. When ITN coverage is high, VBD prevalence declines, because of which individuals may perceive a lower infection risk and discontinue its use. This reduced ITN use can increase the chances of disease resurgence  \cite{agusto2013impact}. Hence, ITN usage evolves dynamically through feedback between disease prevalence, perceived risk, and individual behavioral adaptation. Moreover, ITN usage alters host accessibility to vectors and can redistribute vector bites in a multi-host setting~\cite{le2007elaborated,stone2018evolution}. This raises the question of whether the coupled interaction between adaptive protective behaviour and vector biting behaviour can generate emergent dynamical phenomena such as oscillatory outbreaks, bistability, or regime shift in multi-host disease systems.

Evolutionary game theory provides a natural framework for modelling adaptive protection behavior in response to the disease threat~\cite{bauch2005imitation,li2025effect,sarkar2025modeling,zhang2026analysis}. Previous studies \cite{honjo2014n,broom2016game,han2020voluntary,fortunato2021mathematical,laxmi2022evolutionary,rychtavr2022game,angina2022game,davis2024mathematical} have used the game theoretic framework to primarily focus on the outcome of individuals weighing the risk of infection against the perceived cost of ITN use. Individuals often choose between two strategies to protect or not to protect themselves with bed net usage, based on the perceived cost associated with each choice~\cite{onifade2024dynamic}. When ITN is not used by a fraction of the population, the vector bites can get concentrated on unprotected individuals, potentially increasing their infection risk \cite{miller2016risk,birget2015epidemiological,thomsen2017mosquito,enahoro2020long}. However, Demers \textit{et al.} \cite{demerspp2018} argued that protection status is not static in reality, as individuals may discontinue or re-adopt preventive measures in response to changing disease risk and perceived costs. While these studies demonstrate that both vector biting redistribution and adaptive protection behavior can independently influence disease dynamics, their coupled interaction remains poorly understood, particularly in multi-host systems where vectors may shift feeding toward alternative hosts.  

In multi-host systems, ITN use by the preferred host can increase the vector biting preference for alternative hosts, reshaping long-term disease transmission dynamics~\cite{stone2018evolution}. Moreover, protective measures adopted by the preferred host can trigger adaptive shifts in vector preferences and increase the overall prevalence~\cite{shravani&sharma}. Hence, the effects of adaptive protection cannot be understood independently of the changes in vector biting patterns induced by heterogeneous host protection. Changes in vector biting rates and density can affect individuals' strategic choices, generating a coupled feedback between adaptive bed net use, host exposure, and disease prevalence, thereby giving rise to rich nonlinear dynamics~\cite{weitz2016oscillating,hauert2008ecological}. Therefore, the collective dynamics of the system may differ substantially from those predicted by static transmission models, motivating our work to study the feedback between adaptive ITN use and vector biting preference that shapes disease dynamics. 

In this work, we propose a multi-host VBD model coupling adaptive ITN use with vector biting preference through a co-evolutionary game-theoretic framework. Individuals dynamically switch between ITN use and non-use strategies according to payoff-driven imitation dynamics. The payoff depends on prevalence and vector density, making it dynamic. Vector biting patterns vary with changes in host accessibility induced by protection coverage. The resulting feedback between host behavioral adaptation, vector biting redistribution, and disease prevalence generates rich nonlinear dynamics, including oscillatory outbreaks associated with Hopf bifurcation and critical transitions between disease states arising through saddle-node bifurcation. Our work provides a framework to understand how adaptive protection behavior and vector biting redistribution jointly shape disease dynamics in multi-host systems, and give rise to dynamical outcomes that cannot be captured by static transmission models.
	
The structure of this paper is as follows. In Section \ref{Mathmodel}, we formulate a multi-host vector-borne disease model incorporating adaptive ITN-use behavior and coverage-dependent vector biting preference. Section \ref{modanaly} provides a detailed analysis of the model, where conditions for the existence and stability of equilibria are derived. We also establish the conditions for the occurrence of saddle-node and Hopf bifurcations in this section. In Section \ref{numsec}, numerical simulations are presented to illustrate the dynamical behavior of the system and to probe the effect of disease risk, cost of ITN use, and vector biting behavior on long-term disease dynamics. Section \ref{conclu} discusses the conclusions and elaborates on the biological implications of the model.

	\section{Mathematical model}\label{Mathmodel} 
	We consider a system comprising two host species, $h1$ and $h2$, and a vector population, $v$, and develop a coevolutionary game-theoretic framework to investigate the control of a VBD. Let $N_{h1}$ denote the total population size of host $h1$, which is assumed to be constant. At any time $t$, $N_{h1}$ is divided into three classes: Susceptible $S_{h1}(t)$, Infected $I_{h1}(t)$, and Recovered $R_{h1}(t)$.  Infected individuals of host $h_1$ recover at a per capita rate $\mu_1$, and recovered individuals of host $h_1$ lose immunity and return to the susceptible class at a per capita rate $\delta_1$. Similarly, the total population of host $h2$ is also assumed to be a constant $N_{h2}$ and is divided into susceptible and infected compartments. At any time $t$, the total number of susceptible and infected are denoted as $S_{h2}(t)$ and $I_{h2}(t)$, respectively. The infected host $h2$ individuals recover and become susceptible again at a per capita rate of $\mu_2$. Each host species $hi,i=1,2$ has equal per capita birth and death rates $d_i$, which maintains the constant host population sizes. The total vector population at any time $t$ is represented by $M(t)$, and follows a logistic growth rate. At any time $t$, the vector population is classified into two compartments: susceptible $S_v(t)$ and infected $I_v(t)$. The parameters $b_3$ and $b_{31}$ denote the natural birth rate and the crowding effect on birth rate of the vector population, respectively, while  $d_3$ denotes the death rate of the vector. Throughout the study, assume the growth rate of the vector population is $g=b_3-d_3>0$. Let the number of bites per vector on a host per unit of time be $c$. Assume that the vector exhibits a feeding preference for $h1$ over the other $h2$, which is denoted by $\alpha_v > 0$. If the vector prefers to feed on $h1$, then $\alpha_v>1$, if the vector has equal feeding preference, then $\alpha_v=1$, otherwise, if the vector has a higher preference for $h2$, then $\alpha_v<1$. Let the parameter $\beta_{vh}$ indicate the transmission probability from the infected host to susceptible vectors, and let $\beta_{hv}$ represent the disease transmission probability from the infected vector to the susceptible host, respectively.

	\subsection{ VBD Dynamics with ITN Protection}\label{sec_stone}
	To integrate ITN use in the VBD model, we assume that only $h1$ individuals can choose whether to protect themselves against VBD with ITN. Let $\sigma_{hi}$ be the probability that vector attacks host $hi$ if encountered, and let the encounter rate of $hi$ be $\mathcal{E}_{hi}$, $i=1,2$. If $\theta$ represents the proportion of the $h1$ population using ITN, then  from \cite{stone2018evolution}, we find the bite rate of the vector as
	\begin{align} \label{eqc}
		c(\theta)=	\frac{1-\rho(\theta)}{\tau_f + \tau_2 (1-\rho(\theta))}.
	\end{align}
	Here, $\tau_f$ is the duration of a typical foraging bout, $\tau_2$ is the duration of the resting stage per day, and $\rho(\theta)$ represents the probability that the vector repeats or returns for another foraging bout. The time required to complete a feeding cycle will increase as ITN coverage increases; hence, the overall biting rate will reduce \cite{le2007elaborated}. The daily vector birth rate is given as $b_3(\theta)=f_v c(\theta)$, where $f_v$ is the vector fecundity. If the probability of successfully feeding in the presence of ITN is
	$W(\theta)$, then the probability of a vector successfully obtaining feed can be found as $W(\theta)/(1-\rho(\theta))$. Hence, if we consider $s_r$ as the probability of surviving the resting stage, then the vector death rate can be considered as
	\begin{align}\label{eqnford3}
		d_{3}(\theta)= \frac{c(\theta)}{2}\left(\frac{1-\rho(\theta)}{W(\theta) s_r}-1\right),
	\end{align} 
	 The feeding preference of the vector can be obtained by considering the ratio of the observed number of meals on $h1$ and $h2$ divided by the total number of $h1$ and $h2$ individuals. Hence, in the presence of ITN, the vector preference is given by 
	\begin{align} \label{eqnforav}
		\alpha_v(\theta)=\frac{(1-\varphi) (1-\theta+ s\theta )N_{h2} }{ N_{h1}  W(\theta) \varphi}.
	\end{align}
    Note that in the absence of ITN coverage among $h1$, the vector feeding preference will not change and will be equal to the innate vector preference. Hence, the innate vector feeding preference is given as $\alpha_v(0)$. For more details regarding vector mortality and vector preference in the presence of ITN, see Appendix \ref{detail_appen}. Throughout our study, we will assume that vectors will always feed on $h1$ if encountered, i.e., $\sigma_{h1}=1$.

	\subsection{Evolutionary game dynamics of ITN use}
To capture the adaptive evolution of ITN coverage among $h1$ individuals, we use an evolutionary game-theoretic framework in which individuals switch strategies through payoff-based imitation dynamics. The players in this game are $h1$ individuals, who can choose between two strategies: using or not using ITN protection. ITN use lowers the infection risk by reducing vector bites, but incurs the cost associated with its use. When the benefits of ITN use outweigh its cost, more individuals adopt the strategy of using ITN. To study how the number of ITN users changes in the population, we determine the expected payoff by considering the perceived cost of ITN and comparing it against the perceived disease risk and total vector population. Let $m$ represent the perceived cost of using ITN for a $h1$, then the payoff for an individual using ITN is 
	$$  \Pi_u =-m.$$
	Let $C_1$ and $C_2$ be the proportionality constants relating to the perceived cost associated with the risk of infection and vector population. Then the payoff for an individual not using ITN is
	$$  \Pi_{n}= -\big(C_1 I_{h1} + C_2 M\big).$$  
	The gain in payoff for an $h1$ individual using ITN is given by,
	$$ \Delta \Pi = \Pi_{u}-\Pi_{n}.  $$	  
	 When $\Delta \Pi>0$, the rate at which an $h1$ will choose the strategy to use ITN is proportional to $p\; \Delta \Pi$, where $p$ represents the imitation rate. Following the approach in \cite{hofbauer1998evolutionary}, the evolutionary dynamics of the proportion of $h1$ with ITN coverage is given by the following equation.
	\begin{equation}\label{repleqn}
		\frac{d\theta}{\text{d}t}=p \theta (1-\theta) \big(C_1 I_{h1} + C_2 M -m \big).
	\end{equation}
	The above equation is known as the {\em replicator equation}, which also governs the dynamics of the proportion of individuals not using ITNs $(1-\theta)$. 
	
	\subsection{Coupled Disease-ITN Co-evolutionary Dynamics} \label{modanaly}
	To investigate the effects of evolving ITN use on the prevalence of the disease, we integrate equation \eqref{repleqn} into the two-host VBD model similar to \cite{shravani&sharma}. 
    The dynamics of VBD after incorporating the ITN use intervention are governed by the following nonlinear differential equations:
	
	\begin{align}\label{eqn1}
		\begin{split}
			\diff{S_{h1}}{t}&= d_{1} N_{h1}-\frac{\alpha_{v}(\theta) c(\theta) \beta_{hv} I_{v}  S_{h1}}{\alpha_{v}(\theta) N_{h1}+N_{h2}} + \delta_{1} R_{h1}-d_{1} S_{h1},\\
			\diff{I_{h1}}{t}&=\frac{\alpha_{v}(\theta) c(\theta) \beta_{hv} I_{v} S_{h1}}{\alpha_{v}(\theta) N_{h1}+N_{h2}}-(\mu_1 + d_{1}) I_{h1},\\
			\diff{R_{h1}}{t}&=\mu_1  I_{h1}-(\delta_{1}+d_{1})  R_{h1},\\
			\diff{S_{h2}}{t}&=d_{2} N_{h2}-\frac{ c(\theta) \beta_{hv}  I_{v} S_{h2}}{\alpha_{v}(\theta) N_{h1}+N_{h2}}+ \mu_2 I_{h2}-d_{2} S_{h2} ,\\
			\diff{I_{h2}}{t}&=\frac{  c(\theta) \beta_{hv}  I_{v}  S_{h2}}{\alpha_{v}(\theta) N_{h1}+N_{h2}}-(\mu_2+d_{2}) I_{h2},\\
			\diff{S_{v}}{t}&=(b_3(\theta)-b_{31} (S_v + I_v)) (S_v + I_v)-\frac{c(\theta) \beta_{vh}  (\alpha_{v}(\theta)  I_{h1}+  I_{h2}) S_{v}}{\alpha_{v}(\theta) N_{h1}+N_{h2}}-d_3(\theta) S_{v},\\
			\diff{I_{v}}{t}&=\frac{c(\theta) \beta_{vh} (\alpha_{v}(\theta) I_{h1}+ I_{h2}) S_{v}}{\alpha_{v}(\theta) N_{h1}+N_{h2}}-d_3(\theta) I_{v},\\
			\diff{\theta}{t}&=p \theta (1-\theta) \big( C_1 I_{h1} + C_2 (S_v + I_v) -m \big).
		\end{split}
	\end{align}
	$S_{h1}(0)>0,I_{h1}(0)>0,R_{h1}(0)\geq 0, S_{h2}(0)>0,I_{h2}(0)\ge0,S_{v}(0)>0,I_{v}(0)\ge0, 1>\theta(0)>0$. 
	
Here, the parameters $\beta_{vh}$ and $\beta_{hv}$ represent the transmission probabilities from infected hosts to susceptible vectors and from infected vectors to susceptible hosts, respectively. All parameters are assumed to be finite and positive.

\section{Model analysis}	
	
	\subsection{Positivity and Boundedness of Solutions}	   
		For population models system such as \eqref{eqn2}, the non-negativity and boundedness of the solutions guarantees that all state variables remain meaningful as population sizes and populations remain finite for all future times. These properties imply the existence of a compact positively invariant region, which is the feasible region of the model system \eqref{eqn2}.

	\begin{proposition}
		Consider the set 
		\begin{align*} 
			\Gamma=\Big\{(S_{h1},I_{h1},R_{h1},S_{h2},I_{h2},S_v,I_{v},\theta)\in \mathbb{R}^{8} : 0\le S_{h1}(t)+I_{h1}(t)+R_{h1}(t) = N_{h1},0 \le S_{h2}+ I_{h2} =  N_{h2},\\ 0\le S_v +I_v \le  \frac{( b_{30} - d_{30})}{b_{31}}=K(0), 0\le \theta \le 1  \Big\}.\end{align*}
			\noindent Then $\Gamma$ is positively invariant for all solutions of the model system \eqref{eqn1} that start with non-negative initial conditions. 
	\end{proposition}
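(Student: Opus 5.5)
We will verify invariance face by face: non-negativity of each coordinate, the conservation laws for the two host populations, a comparison argument for the vector total, and the invariance of $[0,1]$ for $\theta$. We work with solutions of \eqref{eqn1}, which exist locally and are unique because the right-hand side is locally Lipschitz on a neighbourhood of $\Gamma$. We assume $c(\theta)$, $W(\theta)$, $\alpha_v(\theta)$ are smooth and positive for $\theta\in[0,1]$, so that $\alpha_v(\theta)N_{h1}+N_{h2}>0$. Here $b_{30}=b_3(0)$ and $d_{30}=d_3(0)$ are the ITN-free vector rates.

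\textbf{Invariance of $\theta\in[0,1]$.} The replicator equation has the factor $\theta(1-\theta)$, so $\theta\equiv0$ and $\theta\equiv1$ are invariant lines. By uniqueness, a solution with $0<\theta(0)<1$ can never reach $0$ or $1$ in finite time.

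\textbf{Non-negativity and host totals.} Each compartment equation is quasi-positive: when a variable vanishes and the others are non-negative, its derivative is non-negative. For example, $\dot S_{h1}|_{S_{h1}=0}=d_1N_{h1}+\delta_1R_{h1}\ge0$, $\dot I_{h1}|_{I_{h1}=0}\ge0$, $\dot R_{h1}|_{R_{h1}=0}=\mu_1I_{h1}\ge0$, and $\dot I_v|_{I_v=0}\ge0$. For the susceptible vectors, $\dot S_v|_{S_v=0}=(b_3-b_{31}I_v)I_v$. This is non-negative once we know $I_v\le b_3/b_{31}$, which follows from the vector bound below. Alternatively, we write $\dot S_v=(\cdots)\,S_v+(b_3-b_{31}M)I_v$ and use the standard integrating-factor argument.

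The standard argument then proceeds as follows. Take the first time $t_1$ at which some coordinate hits zero. Integrating the linear lower bound, for instance $\dot I_{h1}\ge-(\mu_1+d_1)I_{h1}$, gives $I_{h1}(t_1)>0$, which is a contradiction. Adding the first three equations gives $\tfrac{d}{dt}(S_{h1}+I_{h1}+R_{h1})=d_1(N_{h1}-S_{h1}-I_{h1}-R_{h1})$. The total therefore stays equal to $N_{h1}$, and every infinite-time limit converges to it. The same computation applies to $h2$.

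\textbf{Vector bound (main obstacle).} Summing the vector equations gives the logistic law
\[
\dot M=\big(b_3(\theta)-d_3(\theta)\big)M-b_{31}M^2 .
\]
The carrying capacity here depends on $\theta$, but the statement uses $K(0)=(b_{30}-d_{30})/b_{31}$. The key step is to show that $g(\theta)=b_3(\theta)-d_3(\theta)\le g(0)$ for $\theta\in[0,1]$. This monotonicity holds because ITN coverage reduces the bite rate $c(\theta)$, which lowers fecundity $b_3=f_vc$, and reduces the feeding success $W(\theta)$, which raises $d_3$ through \eqref{eqnford3}. This is the point that relies on the modelling assumptions in Appendix~\ref{detail_appen}, and we would state it explicitly as a hypothesis if it is not automatic.

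Granting $g(\theta)\le g(0)$, we have $\dot M\le b_{31}M\,(K(0)-M)$. By comparison with the logistic ODE, $M(0)\le K(0)$ implies $M(t)\le K(0)$ for all $t$. For arbitrary initial data, the same comparison gives $\limsup_{t\to\infty} M(t)\le K(0)$, so $\Gamma$ is also attracting. Finally, $M\ge0$ follows because $M=0$ is invariant. Combining the three pieces shows that $\Gamma$ is positively invariant.
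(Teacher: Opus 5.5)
Your argument follows the paper's route: sign checks on each coordinate face, conservation of the host totals, and a logistic comparison for $M=S_v+I_v$ using $g(\theta)\le g(0)$. That inequality comes from $c'<0$ together with the \emph{assumed} $d_3'>0$. As you suspect, $d_3'>0$ does not follow from \eqref{eqnford3} alone, because $c$ and $1-\rho$ also decrease in $\theta$. Where you differ from the paper you are sharper. Your identity $\tfrac{d}{dt}(S_{h1}+I_{h1}+R_{h1})=d_1\big(N_{h1}-(S_{h1}+I_{h1}+R_{h1})\big)$ is the correct host computation. Likewise, invariance of $\{\theta=0\}$ and $\{\theta=1\}$, together with $M(0)\le K(0)\Rightarrow M(t)\le K(0)$, is what positive invariance actually requires; the paper only derives $\limsup$ bounds.

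The step that fails is the $S_v$ face. You say that $I_v\le b_3/b_{31}$ follows from the vector bound. That bound only gives $I_v\le M\le K(0)=(b_3(0)-d_3(0))/b_{31}$. However, the $b_3$ in $\dot S_v|_{S_v=0}=(b_3(\theta)-b_{31}I_v)I_v$ is evaluated at the current $\theta$, and $b_3(\theta)\le b_3(0)$. What you need is $b_3(0)-d_3(0)\le b_3(\theta)$ on $[0,1]$. Since $b_3$ is decreasing, this is $g(0)\le b_3(1)$, equivalently $f_v\,(c(0)-c(1))\le d_3(0)$. This is an extra hypothesis, not a consequence, and it can fail. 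With the parameter values of Appendix~\ref{detail_appen} and $\mathcal{E}_{h1}=3$, $\mathcal{E}_{h2}=1$, $\sigma_{h2}=0.1$, one gets $b_3(1)\approx1.83<g(0)\approx1.89$. Starting from $S_v=0$ (or slightly positive), $I_v=M=K(0)$ and $\theta$ close to $1$ then gives $\dot S_v<0$, so $S_v$ becomes negative. Your integrating-factor alternative needs the same inequality, $(b_3-b_{31}M)I_v\ge0$. The paper's proof has exactly this hole: it simply asserts $(b_3-b_{31}I_v)I_v\ge0$.

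There are two ways to repair it.
\begin{itemize}
\item Add the hypothesis $b_{31}K(0)\le b_3(1)$.
\item Note that $\Gamma$ as literally written constrains only $S_v+I_v$, not $S_v$ itself. For that statement your host-total, logistic and $\theta$ arguments suffice. Componentwise non-negativity also still holds for every variable except $S_v$, because at $I_v=0$ one has $S_v=M\ge0$.
\end{itemize}
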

	\begin{proof}
			For any $S_{h1}\ge 0, I_{h1} \ge 0$, $R_{h1} \ge 0$, $S_{h2} \ge 0, I_{h2} \ge 0$, $S_{v} \ge 0$, $I_{v} \ge 0$, $\theta \ge 0$, the following holds,
		\begin{align*}
			\diff{S_{h1}}{t} \bigg\rvert_{S_{h1}=0}&= d_{1} N_{h1} + \delta_{1} R_{h1}\ge 0,\quad
			\diff{I_{h1}}{t} \bigg\rvert_{I_{h1}=0}=\frac{\alpha_{v} c \beta_{hv} I_{v} (N_{h1}-R_{h1})}{\alpha_{v} N_{h1}+N_{h2}} \ge 0,\quad 
			\diff{R_{h1}}{t}\bigg\rvert_{R_{h1}=0}=\mu_1  I_{h1}  \ge 0,\\
			\diff{S_{h2}}{t}  \bigg\rvert_{S_{h2}=0}&=d_{2} N_{h2}+ \mu_2 I_{h2} \ge 0,\quad
			\diff{I_{h2}}{t}\bigg\rvert_{I_{h2}=0}=\frac{c \beta_{hv}  I_{v}  N_{h2}}{\alpha_{v} N_{h1}+N_{h2}} \ge 0,\\
			\diff{S_{v}}{t} \bigg\rvert_{S_{v}=0}&=(b_3-b_{31}  I_v)  I_v \ge 0,\quad
			\diff{I_{v}}{t}\bigg\rvert_{I_{v}=0}=\frac{c \beta_{vh} (\alpha_{v} I_{h1}+ I_{h2}) S_v}{\alpha_{v} N_{h1}+N_{h2}}  \ge 0, \quad \diff{\theta}{t}\bigg\rvert_{\theta=0}=0.
		\end{align*}
		From above, we see that all the above solutions starting at the boundary of the non-negative region of $\mathbb{R}^{8}$,
		remain non-negative. Furthermore, due to the vector field pointing inside at every boundry plane, the uniqueness of the solutions guarantees that if the initial conditions are non-negative,  then the solution trajectory of each $S_{h1}, I_{h1}, R_{h1},S_{h2},I_{h2},S_v,I_{v}$ and $\theta$  will	persistently remain inside the non negetive region of $\mathbb{R}^{8}$. Therefore, the nonnegative cone of $\mathbb{R}^{8}$ remains positively invariant for the model system \eqref{eqn1}.
		Consider $N_{h1}=S_{h1}+I_{h1}+R_{h1}$ and  $N_{h2}=S_{h2}+I_{h2}$, then we have the following
		\begin{align*}
		    \diff{N_{h1}}{t}&=(d_1 - d_1)N_{h1}=0 \implies N_{h1}(t)=N_{h1}(0), t \ge 0,\\
             \diff{N_{h2}}{t}&=(d_2 - d_2) N_{h2}=0 \implies N_{h2}(t)=N_{h2}(0),  t \ge 0,
		\end{align*}
		Hence we obtain $S_{h1}(t)+I_{h1}(t)+R_{h1}(t)=N_{h1}(0)$ and $S_{h2}(t)+I_{h2}(t)=N_{h2}(0)$ for all $t \ge 0$. Now consider $M=S_v+I_v$, then we get 
		\[\diff{M}{t}=\big(g(\theta)-b_{31}M \big)M.\]
We have the following from model system \eqref{eqn1}
		\begin{align*}
			\diff{M}{t}& \le g(\theta) M - b_{31} M^2,\\
			\diff{\theta}{t}& \le p (C_1 N_{h1} + C_2 \; M ) (\theta -\theta^2).  
		\end{align*}
		Using the comparison theory we get
		\begin{align*}
			\lim_{t \rightarrow \infty} \sup  M(t) \le & \frac{( b_{30} - d_{30})}{b_{31}}=K(0),\\
			\lim_{t \rightarrow \infty} \sup  \theta(t) \le & 1.
		\end{align*}
		Here $b_{30} \coloneqq b_3(\theta)\big\rvert_{\theta=0}$, $d_{30} \coloneqq d_3(\theta)\big\rvert_{\theta=0}$.
		This is because $\frac{\partial c}{\partial \theta}<0$ and hence $c (\theta)\le c(0)$, therefore we have $b_3(\theta) \le b_3(0)$ for all $\theta \in [ 0,1]$.  We also assume vector mortality rate to increase with ITN coverage, that is $\frac{\partial d_3}{\partial \theta}>0$, hence $d_3(\theta) \ge d_3(0)$. This shows the boundedness for solutions of model \eqref{eqn1}.
	\end{proof}

\subsection{Disease-Free Equilibria and Reproduction Numbers} \label{Append:eqlbm}	
	The model \eqref{eqn1} is well posed since the functions on the right-hand side and their derivatives are continuous, ensuring the existence and uniqueness of solutions for given initial conditions. The model system \eqref{eqn1} satisfies the following
    \[ \diff{N_{h1}}{t}=(d_1 - d_1)N_{h1}=0,\quad  \diff{N_{h2}}{t}=(d_2 - d_2) N_{h2}=0. \]
Hence, $N_{h1}$ and $N_{h2}$ are constants for all $t\ge0$. It follows that $S_{h1}(t)=N_{h1}-I_{h1}(t)-R_{h1}(t)$, $S_{h2}(t)=N_{h2}-I_{h2}(t)$ and $S_{v}(t)=M(t)-I_{v}(t).$ Therefore, model system \eqref{eqn1} can be rewritten in the reduced form:
\begin{align}\label{eqn2}
		\begin{split}
			\diff{I_{h1}}{t}&=\frac{\alpha_{v}(\theta) c(\theta) \beta_{hv} I_{v} (N_{h1}-I_{h1}-R_{h1})}{\alpha_{v}(\theta) N_{h1}+N_{h2}}-(\mu_1 + d_{1}) I_{h1},\\
			\diff{R_{h1}}{t}&=\mu_1  I_{h1}-(\delta_{1}+d_{1})  R_{h1},\\
			\diff{I_{h2}}{t}&=\frac{  c(\theta) \beta_{hv}  I_{v}  (N_{h2}-I_{h2})}{\alpha_{v}(\theta) N_{h1}+N_{h2}}-(\mu_2+d_{2}) I_{h2},\\
			\diff{I_{v}}{t}&=\frac{c(\theta) \beta_{vh} (\alpha_{v}(\theta) I_{h1}+ I_{h2}) (M-I_{v})}{\alpha_{v}(\theta) N_{h1}+N_{h2}}-d_3(\theta) I_{v},\\
            \diff{M}{t}&=\big(g(\theta)-b_{31}M \big) M,\\
			\diff{\theta}{t}&=p \theta (1-\theta) \big( C_1 I_{h1} + C_2 M -m \big).
		\end{split}
	\end{align}
$I_{h1}(0)>0,R_{h1}(0)\geq 0,I_{h2}(0)\ge0,I_{v}(0)\ge0,M(0)>0, 1>\theta(0)>0$.
\begin{remark}
 In the absence of ITN use (\(\theta=0\)), the vector population follows logistic growth with intrinsic carrying capacity
\(
K(0)=\frac{g(0)}{b_{31}}>0.
\)
For \(\theta>0\), the net growth rate depends on the level of ITN coverage, resulting in a carrying capacity
\(
K(\theta)=\frac{g(\theta)}{b_{31}}>0.
\)
Therefore, ITN coverage modifies the carrying capacity of the vector population through its effect on the net growth rate \(g(\theta)\).
\end{remark}        
	
There exist two boundary, i.e, $E_{01}$, $E_{02}$, and three interior disease-free equilibria (DFE) i.e, $E_{03}$, $E_{04}$ and $E_{05}$, for the model system \eqref{eqn2}. 
	 The equilibria $E_{01}=(0,0,0,0,0,0)$ and $E_{02}=(0,0,0,0,0,1)$ correspond to the absence of vector and VBD, both with zero and complete ITN use, respectively. Note that $E_{02}$ is biologically implausible, since full ITN coverage is unlikely to be maintained in the absence of vectors.
     The equilibria $E_{03}=(0,0,0,0,K(0),0)$ and $E_{04}=(0,0,0,0,K(1),1)$ correspond to the presence of vector in disease-free states with zero and complete ITN use, respectively.
     The  equilibrium  $E_{0 5}=(0,0,0,0,K(\theta^*),\theta^*),\; \theta^* \in (0,1)$ denotes partial ITN usage in an entirely susceptible population with the presence of vector population. For the existence of $E_{0 5}$, we must get a $\theta^*\in (0,1)$ such that
   \begin{align}\label{DFE_poly}
    &A_1 \theta^{*2} + A_2 \theta^* +A_3=0,
    \end{align}    
   where, 
     \begin{align*}
        A_1=& \rho_2\left(l_2 s_r\left(\frac{2 b_{31}m \tau_2}{c_2} -(1+2f_v)\right)+\rho_2\right),\\
        A_2= &\Big((1+2f_v)s_r l_2+\rho_2\Big)(1-\rho_1)- \Big((1+2f_v)s_r l_1-1 +\rho_1\Big)\rho_2 -  \frac{2 s_r b_{31}m}{c_2} \Big(l_2(\tau_f+\tau_2(1-\rho_1))- l_1\tau_2\rho_2\Big),\\
        A_3= &\Bigg(\big(1-\rho_1\big)\Big(2f_v+ 1-\frac{2 b_{31}m}{c_2}\Big)-\frac{2 b_{31}m\tau_f}{c_2}\Bigg)l_1 s_r - \big(1-\rho_1\big)^2,\\
        \text{with~} & l_1= F p\Big(\varphi p_{h2}+(1-\varphi)p_{h1}\Big), \, l_2= F p(1-\varphi) p_{h1} \big(s(1-r)-1\big), \,\rho_1= (1-F)p,\quad  \rho_2= F p(1-\varphi)r
    \end{align*}
    
Hence, the equilibrium point $E_{05}$ exists whenever $A_3 (A_1+A_2+A_3)<0$, since under this condition equation \eqref{DFE_poly} has a positive root between 0 and 1.  To investigate the local stability of the disease-free equilibrium (DFE), we evaluate the Jacobian matrix of system \eqref{eqn2} at $(0,0,0,0,M^*,\theta^*),\theta^*\in[0,1]$, which is given as follows
    \begin{align*}
		J_{\text{DFE}}&=	\begin{bmatrix}
		- \mu_1 - d_1 &           0&0& \frac{N_{h1}  \beta_{hv} \alpha_{v}^* c ^*}{(\alpha_{v}^* N_{h1} + N_{h2})}& 0 & 0\\
		\mu_1 & - \delta_1 - d_1 &  0& 0& 0& 0\\
		0&  0&   - \mu_2- d_2 & \frac{N_{h2}  \beta_{hv} c^*}{(\alpha_{v}^* N_{h1} + N_{h2})}& 0& 0\\
		\frac{M^* \beta_{vh} \alpha_{v}^* c^*}{(\alpha_{v}^* N_{h1} + N_{h2})}&  0&  \frac{M^* \beta_{vh}  c^*}{(\alpha_{v}^* N_{h1} + N_{h2})}&  - d_{3}^*& 0 & 0\\
        0&   0&  0& 0&  b_3^*-d_3^* -2 b_{31} M^*& (f_v c^{\prime}-d_3^{\prime})M^*\\
		C_1 p \theta^* (1-\theta^*)&   0&  0& 0&p\theta^*(1-\theta^*)C_2 &   p(1-2\theta^*)(C_2  M^*-m)\\
	\end{bmatrix}.
\end{align*}
Here, \begin{align*}
c^{\prime}&=\frac{-\tau_{f}Fp_f(1-\varphi)r}{(\tau_f + \tau_2 (1-\rho^*))^2},\;
	d_3^{\prime}=\frac{c'}{2}\left(\frac{1-\rho^*}{W^* s_r}-1\right)+ \frac{c^*}{2 s_r W^{*2}} (-W^* \rho'-(1-\rho^*) W'),\\
    	  W'&=F p_f (1-\varphi)p_{h1}(s(1-r)-1), \;
	  \rho'=F p_f (1-\varphi)r.
      \end{align*}
	One eigenvalue for $J_{DFE}$ is $\lambda_1=-(\delta_1+d_1)<0$. 
    The other two eigenvalues of $J_{\text{DFE}}$ evaluated at $E_{01} $ are $\lambda_2=g(0)>0$ and $\lambda_3=-p m$ hence it is unstable.     
   Moreover, the two other eigenvalues of $J_{\text{DFE}}$ evaluated at $E_{02}$ are $\lambda_2=g(1)>0$ and $\lambda_3=p m>0$ therefore it is also unstable.
 To analyse the stability of the other three DFE, we follow the next generation matrix approach \cite{diekmann2010construction}. A detailed derivation of the  dominant eigenvalue of $J_{\text{DFE}}$ evaluated at $(0,0,0,0,K(\theta^*),\theta^*)$ is outlined in Appendix \ref{R0NGM}.  We find that DFE $(0,0,0,0,K(\theta^*),\theta^*)$ will be locally asymptotically stable if 
$$R_{c}(\theta^*)= \frac{\beta_{hv}\beta_{vh}c^{2}(\theta^*) K(\theta^*)}{(\alpha_{v}(\theta^*) N_{h1}+N_{h2})^2 d_3(\theta^*)}\left(\frac{\alpha_{v}^{2}(\theta^*) N_{h1}}{(\mu_{1} + d_1 )} +\frac{N_{h2}}{(\mu_2+d_{2})}\right)<1.$$ 
The threshold $R_c$ represents the {\em control reproduction number}, and is defined as the expected number of secondary vector-borne infections produced by a single infected host in an entirely susceptible population, when ITN coverage among $h1$ is $\theta^*$. 
Hence, the equilibrium $E_{04}$ and $E_{05}$, that is DFE corresponding to $(0,0,0,0,K(\theta^*),\theta^*), \; \theta^*\in(0,1]$,  is locally asymptotically stable if $R_c(\theta^*)<1.$
 Similarly, by evaluating $J_{\text{DFE}}$ at $E_{03}$, we find that $E_{03}$ will be locally asymptotically stable if 
    $$R_{0}= \frac{\beta_{hv}\beta_{vh}c^{2} K(0)}{(\alpha_{v}(0) N_{h1}+N_{h2})^2 d_3(0) }\left(\frac{\alpha_{v}^{2}(0) N_{h1}}{(\mu_{1} + d_1 )} +\frac{N_{h2}}{(\mu_2+d_{2})}\right)<1.$$ 
     Here, $R_0$ is the basic reproduction number in the absence of ITN use, which is similar to the expression obtained in \cite{shravani&sharma}.  The existence and stability of DFE are summarised in the following theorem.
\begin{theorem}
The model system \eqref{eqn2} admits the following disease-free equilibria:
\begin{itemize}
    \item[(i)] $E_{01}=(0,0,0,0,0,0)$, which always exists and is unstable.

    \item[(ii)] $E_{02}=(0,0,0,0,0,1)$, which always exists and is unstable.

    \item[(iii)] $E_{03}=(0,0,0,0,K(0),0)$, which always exists and is locally asymptotically stable if $R_{0}<1$.

    \item[(iv)] $E_{04}=(0,0,0,0,K(1),1)$, which always exists and is locally asymptotically stable if $R_{c}(1)<1$.

    \item[(v)] $E_{05}=(0,0,0,0,K(\theta^*),\theta^*)$, where $\theta^*\in(0,1)$, exists if
    $A_3(A_1+A_2+A_3)<0$ and is locally asymptotically stable if
    $R_c(\theta^*)<1$.
\end{itemize}  
\end{theorem}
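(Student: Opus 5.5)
The plan is to split the argument into an existence part and a stability part, using the reduced system \eqref{eqn2}.

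\textbf{Existence.} Setting $I_{h1}=R_{h1}=I_{h2}=I_v=0$ makes the first four equations of \eqref{eqn2} vanish identically. The equilibrium problem then reduces to two conditions:
\[
\big(g(\theta)-b_{31}M\big)M=0 \quad\text{and}\quad p\theta(1-\theta)(C_2M-m)=0.
\]
The first gives $M=0$ or $M=K(\theta)$. The second gives $\theta\in\{0,1\}$ or $C_2M=m$. Combining these:
\begin{itemize}
\item $M=0$ with $\theta\in\{0,1\}$ yields $E_{01}$ and $E_{02}$.
\item $M=K(\theta)$ with $\theta\in\{0,1\}$ yields $E_{03}$ and $E_{04}$. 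These exist because $K(0),K(1)>0$ by the standing assumption $g>0$ (Remark above).
\item For an interior $\theta^*$ we need $C_2K(\theta^*)=m$, that is, $g(\theta^*)=b_{31}m/C_2$.
\end{itemize}
For the interior case I would substitute $c(\theta)$, $b_3=f_vc$ and $d_3$ from \eqref{eqc}–\eqref{eqnford3}. Here $\rho(\theta)=\rho_1+\rho_2\theta$ and $W(\theta)=l_1+l_2\theta$ are affine in $\theta$, as in the Appendix parametrisation. I would then clear the denominators $2s_rW(\tau_f+\tau_2(1-\rho))$, which produces the quadratic \eqref{DFE_poly} with coefficients $A_1,A_2,A_3$. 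Writing $P(\theta)=A_1\theta^2+A_2\theta+A_3$, we have $P(0)=A_3$ and $P(1)=A_1+A_2+A_3$. The condition $A_3(A_1+A_2+A_3)<0$ is a sign change on $[0,1]$, so the intermediate value theorem gives a root $\theta^*\in(0,1)$, and hence $E_{05}$.

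\textbf{Stability.} I would use the block structure of $J_{\text{DFE}}$. At any DFE the first column has only the entries in rows 1, 2, 4 and 6 that are displayed, and the $(M,\theta)$ rows decouple from the infection block, because row 5 has zeros in the infection columns. The matrix is therefore block lower triangular, with two diagonal blocks:
\begin{itemize}
\item the $4\times4$ infection block in $(I_{h1},R_{h1},I_{h2},I_v)$;
\item the $2\times2$ behaviour–vector block in $(M,\theta)$.
\end{itemize}
The entries $C_1p\theta^*(1-\theta^*)$ lie below the diagonal blocks and do not affect the eigenvalues. Since the $R_{h1}$ column is zero apart from its diagonal entry, $-(\delta_1+d_1)$ is an eigenvalue, which leaves a $3\times3$ Metzler block $F-V$. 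The eigenvalues in each case are then:
\begin{itemize}
\item At $E_{01}$ and $E_{02}$: the block with $M^*=0$ kills the vector-to-host coupling, and the $2\times2$ block is triangular with diagonal $g(\theta^*)$ and $\mp pm$. Since $g(\theta^*)>0$, both points are unstable.
\item At $E_{03}$ and $E_{04}$: the $2\times2$ block is upper triangular with entries $g-2b_{31}K=-g<0$ and $p(1-2\theta^*)(C_2K(\theta^*)-m)$. The latter equals $-p(C_2K(0)-m)$ at $\theta=0$ and $p(C_2K(1)-m)$ at $\theta=1$.
\item At $E_{05}$: the $(M,\theta)$ block has zero $(2,2)$ entry, so its trace is $-g<0$ and its determinant is $-(f_vc'-d_3')K\,p\theta^*(1-\theta^*)C_2>0$, using $c'<0$ and $d_3'>0$. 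By Routh–Hurwitz this block is stable.
\end{itemize}
For the infection block I would apply the next-generation matrix method (Appendix \ref{R0NGM}). Here $F$ carries the cross-infection terms and $V=\mathrm{diag}(\mu_1+d_1,\mu_2+d_2,d_3^*)$. $FV^{-1}$ has the bipartite host–vector form, so its spectral radius is the square root of $R_c(\theta^*)$, with $R_0=R_c(0)$. The result of \cite{diekmann2010construction} then gives that all eigenvalues of $F-V$ have negative real part if and only if $R_c(\theta^*)<1$.

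\textbf{Main obstacle.} I expect two difficulties. The first is the algebraic reduction to $A_1,A_2,A_3$, which is error-prone but mechanical. The second, more substantive, is the sign of the $\theta$-eigenvalue at the boundary points $E_{03}$ and $E_{04}$: it needs $C_2K(0)<m$ and $C_2K(1)>m$ respectively, which are not part of the stated conditions. I would handle this by noting these sign conditions explicitly as implicit hypotheses in items (iii) and (iv). Equivalently, the theorem's stability claims should be read as stability with respect to the disease subsystem, with the threshold $R_0$ or $R_c$ governing it.
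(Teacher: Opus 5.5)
Your argument follows the paper's route: existence of $E_{05}$ from the quadratic \eqref{DFE_poly} and the sign change $A_3(A_1+A_2+A_3)<0$, instability of $E_{01},E_{02}$ from the eigenvalues $g(\theta^*)>0$ and $\mp pm$, and the threshold from a next-generation matrix with spectral radius $\sqrt{R_c(\theta^*)}$. It is more careful at the one point where the paper's proof is incomplete. The paper's Appendix~\ref{R0NGM} puts the $(M,\theta)$ block inside $\Sigma$ and never checks it. That block has the negative off-diagonal entry $(f_vc'-d_3')K$, so $-\Sigma$ is not an M-matrix; in any case the next-generation theorem presupposes that the disease-free subsystem is stable. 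Your block lower-triangular split applies the equivalence $\rho(FV^{-1})<1\iff s(F-V)<0$ only to the $3\times 3$ Metzler block in $(I_{h1},I_{h2},I_v)$, where it is valid, and treats the $(M,\theta)$ block separately. Your trace--determinant check at $E_{05}$ (trace $-g<0$, determinant $-(f_vc'-d_3')K\,p\theta^*(1-\theta^*)C_2>0$, using the paper's assumptions $c'<0$, $d_3'>0$) correctly completes item (v).

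Your caveat on (iii) and (iv) points to a genuine defect in the statement, not in your proof. At $\theta^*=0$ the $(M,\theta)$ block has eigenvalues $-g(0)$ and $p(C_2K(0)-m)$; at $\theta^*=1$ it has $-g(1)$ and $-p(C_2K(1)-m)$. So $E_{03}$ is unstable whenever $m<C_2K(0)$, and $E_{04}$ whenever $m>C_2K(1)$, however small $R_0$ or $R_c(1)$ is. Since $g'=f_vc'-d_3'<0$ makes $K$ strictly decreasing, $E_{05}$ exists precisely when $C_2K(1)<m<C_2K(0)$, and on that whole range both boundary points are unstable. This matches the $m$-dependence in Figure~\ref{fig:1}, and your added hypotheses cannot be dropped.

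Two small corrections. First, you wrote the $\theta$-eigenvalue as $-p(C_2K(0)-m)$ at $\theta=0$ and $p(C_2K(1)-m)$ at $\theta=1$; both signs are reversed, though the conditions $C_2K(0)<m$ and $C_2K(1)>m$ you draw are the right ones. Second, the block lower-triangular form comes from rows $1$--$4$ having zero entries in the $M$ and $\theta$ columns at a DFE, because each such partial derivative vanishes when $I_{h1}=I_{h2}=I_v=0$. It does not come from row $5$ vanishing in the infection columns.
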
    
Parameter sets resulting in stability of various DFE are shown in the numerical example in Figure \ref{fig:1}.
	\subsection{Endemic equilibrium}
The model system \eqref{eqn2} has following endemic equilibria (EE), $E^*_1=(I_{h1}^{*},R_{h1}^{*},I_{h2}^{*}, I_{v}^{*},K,0)$ and $E^*_2=(I_{h1}^{*},R_{h1}^{*},I_{h2}^{*}, I_{v}^{*},K(1),1)$ which denote zero and complete ITN coverage in the presence of VBD, respectively. The model system also exhibits interior equilibrium relating to partial ITN adherence, $E^*=(I_{h1}^{*},R_{h1}^{*},I_{h2}^{*}, I_{v}^{*},K(\theta^*),\theta^{*})$ where $\theta^{*}\in (0,1)$. The components of these EE are given by,
	\begin{align*}
		I_{h1}^*&=\frac{w_1^* \alpha_{v}^* N_{h1} I_v^*}{w_1^* \alpha_{v}^* w_3 I_v^* + \mu_1 +d_1},\quad R_{h1}^*=\frac{\mu_1 I_{h1}^* }{\delta_1 +d_1} , \quad
		I_{h2}^*=\frac{w_1^* N_{h2} I_v^*}{w_1^* I_v^*  +\mu_2 +d_2},  \quad M^*=\frac{g(\theta^*)}{b_{31}}=K(\theta^*).
	\end{align*}
    Here $I_{v}^*$ satisfies the following
 \begin{align}\label{poleqnIv}  B_1 I_v^{*2} +B_2 I_v^* + B_3=0. \end{align}  
 Where
        	\begin{align*}
		B_1 & = w_1^{*2} w_2^* \alpha_v^{*2} N_{h1} + w_1^{*2} \alpha_v^* w_3 (w_2^*  N_{h2}+ d_3^*), \\
		B_2 &= w_1^*( d_3^*  (\mu_1+d_1) - w_1^* w_2^* M^*  \alpha_v^{*2} N_{h1} ) +  w_1^* \alpha_v^* w_3  (d_3^*  (d_2+\mu_2) \\& \quad - w_1^* w_2^* M^*  N_{h2})
		+ w_1^* w_2^* (\alpha_v^{*2} (d_2+\mu_2) N_{h1} + (\mu_1+d_1) N_{h2}),\\ 	
        B_3&=d_3^* (\mu_1+d_1)(d_2+\mu_2) \big(1 - R_c(\theta^*) \big),\\
		w_1(\theta)=&\frac{c(\theta) \beta_{hv}}{\alpha_v(\theta) N_{h1} +N_{h2}},\quad 
		w_2(\theta)=\frac{c(\theta) \beta_{vh}}{\alpha_v(\theta) N_{h1} +N_{h2}},\quad
		w_3=1+\frac{\mu_1 }{\delta_1 +d_1}.
	\end{align*}
	Observe that $M^*>0$ always since $K(\theta^*)>0$ for $\theta^*\in [0,1]$. Since $B_1>0$ hence by Descartes rule of signs if $R_c(\theta^*)>1$ then \eqref{poleqnIv} will have a unique positive real root $I_v^*>0$. 
	
	By evaluating the above at $\theta^*=0$, we find that $E_1^*$ will exist if $R_0>1$.  Similarly by evaluating the above at $\theta^*=1$, we find that $E_2^*$ will exist if $R_c(1)>1$. Also, if  $R_c(\theta^*)>1$ for $\theta^*\in (0,1)$ then one can see that $E^*$ will exist.
    
	Note that $\alpha_v^*,c^*,d_3^*,w_1^*,w_2^*$ are values of $\alpha_v(\theta),c(\theta),d_3(\theta), b_3(\theta),w_1(\theta),w_2(\theta)$ evaluated at $\theta^*$ respectively as described in Section \ref{sec_stone}. Here $\theta^* \in (0,1)$ is the root of the equation $G(\theta)=0$, where
\begin{equation} \label{eqntheta:01}
G(\theta)=\mathcal{G}_1 (\theta)  \left( \frac{ m-C_2 K(\theta)}{C_1}\right)^{2} + \mathcal{G}_2(\theta)  \left( \frac{m-C_2 K(\theta)}{C_1}\right)+ \mathcal{G}_3(\theta),
\end{equation}
      \text{ where,}      
      \begin{align*}
		\mathcal{G}_1&=\alpha_{v}(\theta)  w_2(\theta) ((\mu_{1}+d_1)+ w_1(\theta) \alpha_{v}(\theta)  K(\theta)  w_3 ) ((\mu_2+d_2) \alpha_{v}(\theta) w_3 -(\mu_{1}+d_1)), \nonumber\\
		\mathcal{G}_2&= \big( \mu_{1}+d_1-(\mu_2+d_2) \alpha_v(\theta) w_3\big) \big(w_1(\theta) w_2(\theta) \alpha_v^{2}(\theta) N_{h1} K(\theta)-d_3(\theta) (\mu_{1}+d_1)  \big) \\&-  w_2(\theta) (\alpha_v^{2}(\theta) N_{h1} (\mu_2+d_2) +N_{h2} (\mu_{1}+d_1)) \big(\mu_{1}+d_1 + w_1(\theta) \alpha_v(\theta)   K(\theta) w_3\big),\\
		\mathcal{G}_3&=\alpha_{v} N_{h1}  d_{3}(\theta)   (\mu_{1}+d_1) (\mu_2+d_2)  \big(R_c(\theta)-1\big). 
	\end{align*}

\noindent Note that if $G(0)G(1)<0,$ then by the intermediate value theorem there exists at least one root $\theta^* \in(0,1)$ such that $G(\theta^*)=0.$ This condition is numerically verified in Figure \ref{fig:7} of Appendix \ref{figureapp}, where $G(0)G(1)=-398.45<0$ and the corresponding root of $G(\theta)=0$ is $\theta^*=0.89\in(0,1).$ The existence result for the endemic equilibrium (EE) is summarized below.
    
     \begin{theorem}
     The model \eqref{eqn2} admits the following endemic equilibria
        \begin{itemize}
            \item[(i)] $E_{1}^*=(I_{h1}^{*},R_{h1}^{*},I_{h2}^{*}, I_{v}^{*},K(0),0)$ exists if $R_0>1$, 
            \item[(ii)] $E_{2}^*=(I_{h1}^{*},R_{h1}^{*},I_{h2}^{*}, I_{v}^{*},K(1),1)$ exists if $R_c(1)>1$, 
             \item[(iii)] $E^*=(I_{h1}^{*},R_{h1}^{*},I_{h2}^{*}, I_{v}^{*},K(\theta^*),\theta^*), \theta^* \in (0,1)$ exists if $R_c(\theta^*)>1$ and $G(0) G(1)<0$.
        \end{itemize}
     \end{theorem}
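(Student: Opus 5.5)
The plan is to set every right-hand side of system \eqref{eqn2} to zero and solve for the equilibrium in stages. Each stage is either explicit or reduces to a single scalar equation.

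\textbf{Host and vector components.} The equation for $M$ gives $M^*=g(\theta^*)/b_{31}=K(\theta^*)$, which is positive by the Remark. The equation for $R_{h1}$ gives $R_{h1}^*=\mu_1 I_{h1}^*/(\delta_1+d_1)$. Substituting this into the $I_{h1}$ equation leaves a relation that is linear in $I_{h1}^*$. Solving it with $w_1,w_3$ as defined gives the stated expression for $I_{h1}^*$ in terms of $I_v^*$. The $I_{h2}$ equation is linear in $I_{h2}^*$ in the same way and gives the stated formula. Both expressions are positive and increasing in $I_v^*>0$.

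\textbf{The quadratic in $I_v^*$.} Substitute these into the $I_v$ equation, divide by $I_v^*\neq 0$, and clear the two denominators $w_1^*\alpha_v^*w_3I_v^*+\mu_1+d_1$ and $w_1^*I_v^*+\mu_2+d_2$. This yields the quadratic \eqref{poleqnIv}. The step to check is the constant term: setting $I_v^*=0$ in the cleared expression should give $d_3^*(\mu_1+d_1)(\mu_2+d_2)-w_1^*w_2^*M^*\big(\alpha_v^{*2}N_{h1}(\mu_2+d_2)+N_{h2}(\mu_1+d_1)\big)$. Comparing with the formula for $R_c(\theta^*)$, this equals $d_3^*(\mu_1+d_1)(\mu_2+d_2)(1-R_c(\theta^*))$. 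Since $B_1>0$, the condition $R_c(\theta^*)>1$ makes $B_3<0$. The product of the roots is then negative, so there is exactly one positive root $I_v^*$, whatever the sign of $B_2$.

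Feasibility also has to be checked: $I_v^*<M^*$, and $I_{h1}^*+R_{h1}^*<N_{h1}$, $I_{h2}^*<N_{h2}$. The host inequalities follow directly from the fractional forms. For the vector, the $I_v$ equation gives $d_3^*I_v^*=(\text{positive})\cdot(M^*-I_v^*)$, which forces $M^*-I_v^*>0$.

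\textbf{Cases (i) and (ii).} At $\theta^*\in\{0,1\}$ the factor $\theta(1-\theta)$ vanishes, so the $\theta$ equation holds automatically. Evaluating at $\theta^*=0$ and $\theta^*=1$ gives $E_1^*$ under $R_c(0)=R_0>1$ and $E_2^*$ under $R_c(1)>1$.

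\textbf{Case (iii) and the main obstacle.} For $\theta^*\in(0,1)$ we additionally need $C_1I_{h1}^*+C_2K(\theta^*)=m$, so $I_{h1}^*=(m-C_2K(\theta^*))/C_1$. The main obstacle is to turn this into the scalar equation $G(\theta)=0$ of \eqref{eqntheta:01}. Along the endemic branch parametrised by $\theta$, first invert the $I_{h1}^*$ formula to write $I_v$ in terms of $I_{h1}$: $I_v=(\mu_1+d_1)I_{h1}/\big(w_1\alpha_v(N_{h1}-w_3I_{h1})\big)$. Substitute this and the induced $I_{h2}$ into the $I_v$ equilibrium relation and clear denominators. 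The result is a quadratic in $I_{h1}$ whose coefficients are $\mathcal G_1,\mathcal G_2,\mathcal G_3$, with constant term proportional to $R_c(\theta)-1$. Evaluating that quadratic at $I_{h1}=(m-C_2K(\theta))/C_1$ gives $G(\theta)$. Since all coefficient functions are continuous on $[0,1]$, $G$ is continuous. The hypothesis $G(0)G(1)<0$ then yields a root $\theta^*\in(0,1)$ by the intermediate value theorem, and $R_c(\theta^*)>1$ guarantees, as above, that the remaining components are positive and feasible. This algebraic elimination is the step I expect to be tedious, and it needs care to match the stated $\mathcal G_i$ exactly.
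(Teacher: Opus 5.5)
Your plan is the paper's argument step for step: the explicit host components, the quadratic \eqref{poleqnIv} with $B_3=d_3^*(\mu_1+d_1)(\mu_2+d_2)(1-R_c(\theta^*))$, evaluation at $\theta^*=0,1$, and the intermediate value theorem on $G$. Your elimination via $\psi(x)=(\mu_1+d_1)x/(w_1\alpha_v(N_{h1}-w_3x))$ does reproduce $\mathcal G_1,\mathcal G_2,\mathcal G_3$ exactly, and (i), (ii) and your feasibility checks are fine.

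\textbf{The gap.} It is the last sentence of case (iii). Put $x(\theta)=(m-C_2K(\theta))/C_1$ and $q_\theta(x)=\mathcal G_1x^2+\mathcal G_2x+\mathcal G_3$, so that $G(\theta)=q_\theta(x(\theta))$. Let $P_\theta$ denote the left side of \eqref{poleqnIv}. One checks that $q_\theta(x)=-\alpha_v(N_{h1}-w_3x)^2N_{h1}^{-1}P_\theta(\psi(x))$. Hence the roots of $q_\theta$ are the images under $\psi^{-1}(I_v)=w_1\alpha_vN_{h1}I_v/(w_1\alpha_vw_3I_v+\mu_1+d_1)$ of \emph{both} roots $I_v^+>0>I_v^-$ of \eqref{poleqnIv}, not just the positive one.
\begin{itemize}
\item The image of $I_v^+$ lies in $(0,N_{h1}/w_3)$.
\item The image of $I_v^-$ lies in $(-\infty,0)$ or in $(N_{h1}/w_3,\infty)$.
\end{itemize}
So $G(\theta^*)=0$ does not tell you which root $x(\theta^*)$ is. Suppose it is the spurious one. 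Then the positive state built from $I_v^+$, which $R_c(\theta^*)>1$ provides, violates $C_1I_{h1}^*+C_2K(\theta^*)=m$. The state built from $x(\theta^*)$ instead has $I_{h1}^*<0$ or $I_{h1}^*+R_{h1}^*>N_{h1}$.

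This is not hypothetical. The parameters $C_1,C_2,m$ enter $G$ only through $x(\theta)$ and not through the $\mathcal G_i$. Take $R_c>1$ on $[0,1]$ and $K$ decreasing, as the paper assumes. Then $C_1,C_2,m$ can be tuned so that $x(\theta)$ crosses the spurious root while staying on one side of the genuine one. In that case $G(0)G(1)<0$ and $R_c(\theta^*)>1$ both hold, yet no interior endemic equilibrium exists. The paper's own argument passes over exactly this point.

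\textbf{How to close it.} You need a condition tying $m$ to the feasible range, for example $0<m-C_2K(\theta^*)<C_1N_{h1}/w_3$. Under this condition:
\begin{itemize}
\item $I_v=\psi(x(\theta^*))>0$, and every equilibrium equation holds, because the factor linking $q_{\theta^*}$ and $P_{\theta^*}$ does not vanish there.
\item Your feasibility argument then applies unchanged.
\item $R_c(\theta^*)>1$ follows automatically, since for $R_c\le1$ one has $B_2>0$ and $B_3\ge0$, so \eqref{poleqnIv} has no positive root.
\end{itemize}
A cleaner alternative is to assume $R_c>1$ on $[0,1]$ and apply the intermediate value theorem directly to $H(\theta)=C_1I_{h1}^*(\theta)+C_2K(\theta)-m$. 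Here $I_{h1}^*(\theta)$ is built from the positive root of \eqref{poleqnIv}, which depends continuously on $\theta$ because $B_1>0>B_3$. A sign change of $H$ between $\theta=0$ and $\theta=1$ then yields a genuine $E^*$. These endpoint signs are exactly the ones the paper uses for the instability of $E_1^*$ and $E_2^*$.
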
  
\noindent Now to study the stability of EE we find the variational matrix of the model system \eqref{eqn2} evaluated at $(I_{h1}^{*},R_{h1}^{*},I_{h2}^{*}, I_{v}^{*},K(\theta^*),\theta^*)$ which is given as 
	\begin{align*}
		J_{EE}=	\begin{bmatrix}
			f_{11}&f_{12}&0&f_{14}&0&f_{16}\\
			\mu_{1}&f_{22}&0&0&0&0\\
			0&0&f_{33}&f_{34}&0&f_{36}\\
			f_{41}&0&f_{43}&f_{44}&f_{45}&f_{46}\\
            0&0&0&0&f_{55}&f_{56}\\
			f_{61}&0&0&0&f_{65}&f_{66}\\
		\end{bmatrix},	
	\end{align*}
	where, 
	\begin{align*}
	f_{11}&=f_{12} - \mu_{1} - d_1, 	
    f_{12}=-\frac{\beta_{hv} \alpha_{v}^* c^* I_{v}^*}{(\alpha_{v}^* N_{h1} + N_{h2})},
	f_{14}=\frac{\beta_{hv} \alpha_{v}^* c^*  S_{h1}^*}{(\alpha_{v}^* N_{h1} + N_{h2})},f_{22} = - \delta_1 - d_1, f_{61}=C_1 p \theta^* (1-\theta^*),\\ 
	f_{33}&= - \mu_2-d_2 - \frac{\beta_{hv} c^* I_{v}^*}{(\alpha_{v}^* N_{h1} + N_{h2})} ,
	f_{34}=\frac{\beta_{hv} c^* S_{h2}^*}{(\alpha_{v}^* N_{h1} + N_{h2})} ,	f_{41}=\frac{\beta_{vh} \alpha_{v}^* c^* S_v^*}{(\alpha_{v}^* N_{h1} + N_{h2})} ,f_{43}=\frac{\beta_{vh} c^* S_v^*}{(\alpha_{v}^* N_{h1} + N_{h2})},\\
	f_{44}&=- \frac{\beta_{vh} c^* (\alpha_{v}^* I_{h1}^* + I_{h2}^*) }{(\alpha_{v}^* N_{h1} + N_{h2})} - d_3^*,
	 f_{45}=\frac{\beta_{vh} c^* (\alpha^*_v I_{h1}^* +I_{h2}^*)}{(\alpha_{v}^* N_{h1} + N_{h2})},f_{16}=\frac{\beta_{hv} I_v^* S_{h1}^*}{(\alpha_{v}^* N_{h1} + N_{h2})^2} \big( \alpha_{v}^{*2} N_{h1} c^{\prime} + N_{h2} (c \alpha_{v})^{\prime}\big),\\
	f_{36}&=\frac{\beta_{hv} I_v^* S_{h2}^*}{(\alpha_{v}^* N_{h1} + N_{h2})^2} \Big(c^{\prime} (\alpha_{v}^* N_{h1} + N_{h2})-c^* \alpha_{v}^{\prime} N_{h1}  \Big),
   f_{56}=(f_v c^{\prime}-d_3^{\prime})M^*,
    f_{65}=C_2 p\theta^*(1-\theta^*),\\
	f_{46}&=\beta_{vh} S_v^* \left(  \frac{(\alpha_{v}^* I_{h1}^* + I_{h2}^*)  c^{\prime}}{(\alpha_{v}^* N_{h1} + N_{h2})} + \frac{c^* \alpha_{v}^{\prime} (I_{h1}^* N_{h2}-N_{h1} I_{h2}^*)}{(\alpha_{v}^* N_{h1} + N_{h2})^2} \right) - I_v^* d_3^{\prime}, f_{55}=-(b_3^*-d_3^*),
	\\&f_{66}= p (1 - 2 \theta^*)(C_1 I_{h1}^* + C_2 M^*-m ),\alpha_v'= \frac{(1-\varphi)N_{h2}}{\varphi N_{h1} W^{*2}}\big(W^*(s-1)-(1+\theta^*(s-1)W')\big).
	%
\end{align*}

At $\theta^*=0$ and $\theta^*=1$ we have $f_{61}=0=f_{65}$. In such a case, two of the eigenvalues of $J_{EE}$ will be $\lambda_1=f_{55}<0$ and $\lambda_2=f_{66}$. At $\theta^*=0$ we have $\lambda_2>0$ if $C_1 I_{h1}^*+C_2 M^*-m>0$ whereas at $\theta^*=1$ we get $\lambda_2>0$ if $C_1 I_{h1}^*+C_2 M^*-m<0$.
Hence we find that $E_1^*$ is unstable if $C_1 I_{h1}^*+C_2 M^*-m>0$ and $E_2^*$ is unstable if $C_1 I_{h1}^*+C_2 M^*-m<0$.\\
Now, if at $\theta^*=0$ we have $C_1 I_{h1}^*+C_2 K-m<0$, then $E_1^*$ is stable if all the roots of the following equation evaluated at $E_1^*$ have negative real part
    \[ \lambda_1^4 +N_1 \lambda^3 +N_2 \lambda^2 +N_3 \lambda +N_4=0,\]
 where,
    \begin{align*}
    \begin{split}
       N_1&=- f_{11} - f_{22} - f_{33} - f_{44},\\
        N_2&=-f_{12} \mu_1+f_{22} f_{33}-f_{14} f_{41}-f_{34} f_{43}+(f_{22}+f_{33}) f_{44}+f_{11} (f_{22}+f_{33}+f_{44}),\\ 
N_3&=-f_{22}f_{33}(f_{44}+f_{11})+(f_{11}+f_{22})f_{34}f_{43}+f_{12}\mu_1(f_{33}+f_{44})+(f_{22}+f_{33})(f_{14}f_{41}-f_{11}f_{44}),\\
N_4&=(f_{11}f_{22}-f_{12}\mu_1)(f_{33}f_{44}-f_{34}f_{43})-f_{14}f_{22}f_{33}f_{41}.
 \end{split}
    \end{align*}
	Since $N_1>0$ always, by the Routh Hurwitz criterion, we establish that all eigenvalues of $J_E^*$ evaluated at $E_1$ will have a negative real part if conditions \eqref{cond_EE1_stab2} are satisfied. 
    \begin{flalign}\label{cond_EE1_stab2}
\hspace{-5cm}	N_i>0, i=2,3,4, \quad N_1 N_2> N_3,\quad N_{3}(N_1 N_2-N_3)-N_4 N_1^2>0.
\end{flalign}
After simplification, one can see that conditions \eqref{cond_EE1_stab2} are satisfied if 
    \begin{align}\label{cond_EE_stab2}
    \begin{split}
    \frac{2 \beta_{hv} \beta_{vh} c^{*2} \alpha_v^{*2}  N_{h1} M^*}{(\alpha_v^{*} N_{h1}+N_{h2})^2}<d_3^* (\mu_1+d_1) \text{  and  }
    \frac{2 \beta_{hv} \beta_{vh} c^{*2}  N_{h2} M^*}{(\alpha_v^{*} N_{h1}+N_{h2})^2}<d_3^* (\mu_2+d_2). 
    \end{split}
\end{align}
Similarly, at $\theta^*=1$ if we get $C_1 I_{h1}^*+C_2 K(1)-m>0$, then $E^*_2$ is stable if the conditions \eqref{cond_EE_stab2} evaluated at $E^*_2$ are satisfied. 

  The characteristic equation associated with $J_{EE}$  of model \eqref{eqn2} evaluated at $E^*$ is given as
	\begin{align}\label{charceqn}
		\lambda ^6 + M_1 \lambda ^5 + M_2 \lambda ^4+ M_3 \lambda ^3+ M_4 \lambda^2 + M_5 \lambda +M_6=0.
	\end{align}
	The coefficients $M_i$'s for the above expression are stated in Appendix \ref{charceqn1}. Since $M_1>0$ always, by using the Routh Hurwitz criterion, we establish that all eigenvalues of $J_{EE}$ at $E^*$ will have negative real part if conditions \eqref{cond_EE_stab1} are satisfied. 
    \begin{align}\label{cond_EE_stab1}
	\begin{split}
		&M_j>0, j=2,\dots,6,\quad
		M_1 M_2>M_3,\quad
		M_3(M_1 M_2-M_3)>M_1 (M_1 M_4-M_5),\\
		&(M_{1} M_{2}-M_{3})(M_{1} M_{6} + M_{3} M_{4} -M_{2} M_{5})>(M_{1} M_{4} - M_{5})^2,\\
		& M_{5} \big(M_{1} M_{3} (M_{2} M_{4} - 3 M_{6}) - M_{1}^2 (M_{4}^2 - 2 M_{2} M_{6}) - M_{3}^2 M_{4} +M_{5} (M_{2} M_{3} + M_{1} (2  M_{4} - M_{2}^2))  - M_{5}^2\big) \\& + M_{6} M_{3} (M_{3}^2 + M_{1} (M_{1} M_{4} - M_{2} M_{3})) - M_{1}^3 M_{6}^2>0. 
	\end{split}
\end{align}
    We summarise the stability results of EE in the following theorem.
    
 \begin{theorem}
 The local stability of the endemic equilibria of system \eqref{eqn2} is characterized as follows:
\begin{itemize}
    \item[(i)] If $C_1 I_{h1}^* + C_2 K(0) - m < 0$, then the equilibrium $E_1^*$ is locally asymptotically stable provided that the conditions in \eqref{cond_EE_stab2}, evaluated at $E_1^*$, are satisfied.

    \item[(ii)] If $C_1 I_{h1}^* + C_2 K(1) - m > 0$, then the equilibrium $E_2^*$ is locally asymptotically stable provided that the conditions in \eqref{cond_EE_stab2}, evaluated at $E_2^*$, are satisfied.

    \item[(iii)] The interior endemic equilibrium $E^*$ is locally asymptotically stable if and only if the conditions in \eqref{cond_EE_stab1}, evaluated at $E^*$, are satisfied.
\end{itemize}
 \end{theorem}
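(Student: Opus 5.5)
The plan is to linearize system \eqref{eqn2} about each endemic equilibrium, using the variational matrix $J_{EE}$ already computed. Stability will then be decided from the sign pattern of its entries together with the Routh--Hurwitz criterion. The boundary cases (i)--(ii) and the interior case (iii) are handled separately, because at $\theta^*\in\{0,1\}$ the factor $\theta^*(1-\theta^*)$ vanishes and this produces a block-triangular structure.

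For (i) and (ii), I would first note that at $\theta^*=0$ or $1$ we have $f_{61}=f_{65}=0$. The last row of $J_{EE}$ then has only the diagonal entry $f_{66}$, and the fifth row has only $f_{55}$ and $f_{56}$. Expanding the characteristic determinant along the sixth row and then the fifth row factors it as
\[(\lambda-f_{66})(\lambda-f_{55})\,P_4(\lambda),\]
where $P_4$ is the characteristic polynomial of the $4\times4$ block in $(I_{h1},R_{h1},I_{h2},I_v)$. I would check that $f_{55}=-g(\theta^*)<0$; the paper writes $f_{55}=-(b_3^*-d_3^*)$, which is this since $M^*=K(\theta^*)$. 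Next, $f_{66}=p(1-2\theta^*)(C_1I_{h1}^*+C_2M^*-m)$ is negative exactly under the stated sign hypothesis in each case, because the factor $1-2\theta^*$ equals $+1$ at $\theta^*=0$ and $-1$ at $\theta^*=1$. It then remains to show that $P_4(\lambda)=\lambda^4+N_1\lambda^3+N_2\lambda^2+N_3\lambda+N_4$ is Hurwitz.

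To show $P_4$ is Hurwitz, I would use the signs $f_{11},f_{22},f_{33},f_{44},f_{12}<0$ and $f_{14},f_{34},f_{41},f_{43}>0$. These give $N_1>0$ immediately. The remaining Routh--Hurwitz inequalities \eqref{cond_EE1_stab2} should then be reduced to \eqref{cond_EE_stab2} by bounding the coupling products $f_{14}f_{41}$ and $f_{34}f_{43}$. Using $S_{h1}^*\le N_{h1}$ and $S_v^*\le M^*$, each product is at most the left-hand side of the corresponding inequality in \eqref{cond_EE_stab2}, divided by two. The diagonal products satisfy $|f_{11}||f_{44}|\ge(\mu_1+d_1)d_3^*$ and $|f_{33}||f_{44}|\ge(\mu_2+d_2)d_3^*$. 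Under \eqref{cond_EE_stab2}, each coupling product is therefore at most half the corresponding diagonal product. This dominance should make $N_2,N_3,N_4$ positive, since every negative contribution is paired with a larger positive diagonal term. It should also give $N_1N_2-N_3>0$ and the final Hurwitz determinant, by expanding and grouping terms so that each mixed term $f_{ii}f_{jj}f_{k4}f_{4k}$ is absorbed by $f_{ii}f_{jj}f_{kk}f_{44}$.

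This last inequality, $N_3(N_1N_2-N_3)-N_4N_1^2>0$, is the main obstacle, since it is a polynomial of degree six in the entries. I would first try to treat it as a perturbation of the decoupled case, where all $f_{14},f_{34},f_{41},f_{43}$ are zero and the determinant is a manifestly positive sum of products of the negative diagonal entries. The factor $2$ in \eqref{cond_EE_stab2} would then be used to control all the cross terms. If that fails, I would argue instead that the $4\times4$ block is a Metzler-type matrix up to the $f_{12}$ entry and show it is diagonally dominant after a positive rescaling, which gives stability directly.

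For (iii), the interior equilibrium has $\theta^*(1-\theta^*)>0$, so no factorization is available. Here I would simply expand $\det(\lambda I-J_{EE})$ to obtain \eqref{charceqn}, with the coefficients $M_i$ listed in Appendix \ref{charceqn1}. I would verify $M_1=-\operatorname{tr}J_{EE}>0$, using that $f_{66}=0$ at the interior equilibrium because $C_1I_{h1}^*+C_2M^*=m$ there. I would then invoke the Routh--Hurwitz theorem for degree six: all roots have negative real part if and only if the Hurwitz determinants $\Delta_1,\dots,\Delta_6$ are all positive. Writing out $\Delta_2$ through $\Delta_5$, together with $\Delta_6=M_6\Delta_5$ so that $M_6>0$, reproduces exactly \eqref{cond_EE_stab1}. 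This gives the \emph{if and only if} statement.
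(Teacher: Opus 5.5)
Your outline follows the paper's proof. At $\theta^*\in\{0,1\}$, $J_{EE}$ is block upper triangular, with eigenvalues $f_{55}=-g(\theta^*)$, $f_{66}$, and those of the $4\times4$ infection block, which is then treated by Routh--Hurwitz. At $E^*$, one applies Routh--Hurwitz to \eqref{charceqn}, with $M_1>0$ because $f_{66}=0$.

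\textbf{Parts (i) and (ii).} You leave open the step you call the main obstacle, and your fallback would not close it in general. Diagonal dominance after positive rescaling requires the comparison matrix of the $(I_{h1},R_{h1})$ block to be an M-matrix, i.e.\ $|f_{12}|(\mu_1-\delta_1-d_1)<(\mu_1+d_1)(\delta_1+d_1)$. This fails for a large force of infection $|f_{12}|$ whenever $\mu_1>\delta_1+d_1$, which holds in all the paper's parameter sets.

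More importantly, the step is unnecessary. Divide the two inequalities in \eqref{cond_EE_stab2} by $d_3^*(\mu_1+d_1)$ and $d_3^*(\mu_2+d_2)$ respectively, and add them. Since $M^*=K(\theta^*)$, this gives $2R_c(\theta^*)<2$, i.e.\ $R_0<1$ at $E_1^*$ and $R_c(1)<1$ at $E_2^*$. Now substitute $I_{h1}^*,I_{h2}^*$ into $\dot I_v=0$ and divide by $I_v^*>0$. This gives $d_3^*=\Phi(I_v^*)$, where
\[
\Phi(I_v)=w_1^*w_2^*\,(M^*-I_v)\left(\frac{\alpha_v^{*2}N_{h1}}{w_1^*\alpha_v^*w_3I_v+\mu_1+d_1}+\frac{N_{h2}}{w_1^*I_v+\mu_2+d_2}\right).
\]
This $\Phi$ is strictly decreasing on $[0,M^*]$, non-positive for $I_v\ge M^*$, and satisfies $\Phi(0)=d_3^*R_c(\theta^*)$. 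Hence an endemic equilibrium exists only if $R_c(\theta^*)>1$. The hypotheses of (i) and (ii) therefore can never hold at an existing $E_1^*$ or $E_2^*$, so these parts are vacuously true. The paper's ``after simplification'' does not notice this either. No sixth-degree Hurwitz inequality needs to be checked.

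\textbf{Part (iii).} Routh--Hurwitz characterizes when all eigenvalues of $J_{EE}$ have negative real part. Through linearization this gives the ``if'' direction, but not ``only if''. Local asymptotic stability does not force the strict inequalities when $E^*$ is nonhyperbolic. The paper's own first Hopf point is an example. There $m^*$ is defined by $\Delta_5=0$, so \eqref{cond_EE_stab1} fails. Yet with the reported negative first Lyapunov coefficient, $E^*$ is still (non-exponentially) asymptotically stable.

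To claim the equivalence, you must either read ``locally asymptotically stable'' as linear (exponential) stability or assume $E^*$ is hyperbolic. The paper's proof argues only the ``if'' direction, so it shares this gap. The extra conditions $M_2,\dots,M_5>0$ in \eqref{cond_EE_stab1} are not Hurwitz determinants, but they follow from Hurwitz stability and do not affect the equivalence.
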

    
	\subsection{Saddle node bifurcation}
	To establish that the model system \eqref{eqn2} undergoes saddle-node bifurcation, we select $\mu_1$ as the bifurcation parameter and consider the other parameters constant.  The, Jacobian matrix $J_{EE}$  has a simple eigenvalue 0, if $M_6(\mu_1^*) = 0$ where $\mu_1^*$ is the critical value at which bifurcation occurs. Let $E_{SN}^*$ represents the value of the equilibrium point $E^*$ at $\mu_1=\mu_1^*$.
    To show Saddle-node bifurcation occurs at $\mu_1=\mu_1^*$ we establish the transversality conditions \cite{perko}.
	Let $\eta_1=(\eta_{11},\eta_{12},\eta_{13},\eta_{14},\eta_{15},\eta_{16})^{T}$ and $\eta_2=(\eta_{21},\eta_{22},\eta_{23},\eta_{24},\eta_{25},\eta_{26}) $ be the right and left eigenvectors of the Jacobian matrix $J_{E_{SN}^*}\big\rvert_{\mu_{1}=\mu_1^*}$ corresponding to 0 eigenvalue, where
	\begin{align*}
		\eta_{11}&=\frac{1}{f_{61}}\bigg(\frac{f_{65} f_{56}}{f_{55}}-f_{66}\bigg),\quad \eta_{12}= \frac{-\mu_{1}}{f_{22} f_{61}}\bigg(\frac{f_{65} f_{56}}{f_{55}}-f_{66}\bigg), \\
		\eta_{13}&=\bigg(\frac{f_{41}}{f_{61}}\Big(\frac{f_{65}f_{56}}{f_{55}}-f_{66}\Big)-\frac{f_{45}f_{56}}{f_{55}}+ f_{46}-\frac{f_{44}f_{36}}{f_{34}}\bigg) \frac{f_{34}}{f_{44} f_{33}-f_{34} f_{43} },\\
        \eta_{14}&=\frac{1}{f_{14}}\bigg(\frac{1}{f_{61}}\Big(\frac{f_{12}\mu_1}{f_{22}}-f_{11}\Big)\Big(\frac{f_{65}f_{56}}{f_{55}}-f_{66}\Big)-f_{16}\bigg), \quad
         \eta_{15}=-\frac{f_{56}}{f_{55}}, \quad \eta_{16}=1,\\
        \eta_{21}&=\frac{1}{f_{14}} \Big( \frac{f_{43} f_{34}}{f_{33}} -f_{44}\Big),\quad
       \eta_{22}=-\frac{f_{12}}{f_{22}}\eta_{21}, \quad \eta_{23}=-\frac{f_{43}}{f_{33}},\quad \eta_{24}=1,\\
		\eta_{25}&= \left( \frac{f_{16}}{f_{14}}\Big( f_{44}- \frac{f_{43} f_{34}}{f_{33} f_{14}} \Big) + \frac{f_{66} f_{45}}{f_{65}}+\frac{f_{43} f_{36}}{f_{33}} -f_{46} + \right) \frac{f_{65} }
        { f_{56} f_{65}  - f_{55} f_{66}},\\
		\eta_{26}&=\frac{1}{f_{61}} \bigg( \frac{1}{f_{14}}\Big(f_{11} -\frac{\mu_1 f_{12}}{f_22}\Big) \Big(f_{44}-\frac{f_{43} f_{34}}{f_{33}}\Big) -f_{41} \bigg).
	\end{align*}
	The model system \eqref{eqn2} can be expressed as 
     \begin{align*} 
     \diff{X}{t}= Q= &[Q_1(X,\mu_1),Q_2(X,\mu_1),Q_3(X,\mu_1),Q_4(X,\mu_1),Q_5(X,\mu_1),Q_6(X,\mu_1)]^{T},\\
     &\text{ where } X=[I_{h1}, R_{h1}, I_{h2},I_v,M,\theta]^{T}. 
     \end{align*}
    We find the following transversality conditions: \\
$ \mathcal{Q}_1= \eta_2 \cdot \frac{\partial Q}{\partial \mu_{1}}\bigg\rvert_{E_{SN}^*(\mu_1=\mu_1^*)}=(\eta_{22}-\eta_{21})I_{h1}^*$,\\
$\mathcal{Q}_2= \eta_2 \cdot \left[D^2_{(I_{h1},R_{h1},I_{h2},I_{v},M,\theta)} Q(\eta_{1},\eta_{1})\right]\bigg\rvert_{E_{SN}^*(\mu_1=\mu_1^*)}=\eta_2 \cdot  \begin{bmatrix}
	k_1 &
	0&
	k_3&
	k_4&
	k_5&
	k_6
\end{bmatrix}^{T}$,\\
where, \begin{align*}
	k_1&=2 \eta_{14} \bigg( \eta_{11} \frac{\partial^2 Q_1}{\partial I_v \partial I_{h1}} +\eta_{12} \frac{\partial^2 Q_1}{\partial I_v \partial R_{h1}} \bigg)+
	2 \eta_{16} \bigg( \eta_{11} \frac{\partial^2 Q_1}{\partial I_{h1} \partial \theta}+ \eta_{12} \frac{\partial^2 Q_1}{\partial R_{h1} \partial \theta} + \eta_{14} \frac{\partial^2 Q_1}{\partial I_{v} \partial \theta} \bigg) + \eta_{16}^2 \frac{\partial^2 Q_1}{\partial  \theta^2 },\\
	k_3&=2 \eta_{14} \bigg( \eta_{13} \frac{\partial^2 Q_3}{\partial I_v \partial I_{h2}}  + \eta_{16} \frac{\partial^2 Q_3}{\partial I_v \partial \theta} \bigg) + 2 \eta_{13} \eta_{16} \frac{\partial^2 Q_3}{\partial I_{h2} \partial \theta} +\eta_{16}^2 \frac{\partial^2 Q_3}{\partial  \theta^2 },\\
	k_4&=2 \eta_{15} \bigg( \eta_{11} \frac{\partial^2 Q_4}{\partial M \partial I_{h1}} + 
	\eta_{13} \frac{\partial^2 Q_4}{\partial I_{h2} \partial M}  + \eta_{16} \frac{\partial^2 Q_4}{\partial M \partial \theta} \bigg) 
	+ 2 \eta_{14} \bigg( \eta_{11} \frac{\partial^2 Q_4}{\partial I_{h1} \partial I_{v}} + \eta_{13} \frac{\partial^2 Q_4}{\partial I_{h2} \partial I_{v}} + \eta_{16} \frac{\partial^2 Q_4}{\partial I_v \partial \theta}  \bigg)
	\\& + 2 \eta_{16} \bigg(\eta_{11} \frac{\partial^2 Q_4}{\partial I_{h1} \partial \theta} + \eta_{13} \frac{\partial^2 Q_4}{\partial I_{h2} \partial \theta}\bigg) + \eta_{16}^2 \frac{\partial^2 Q_4}{\partial  \theta^2 },\quad
	k_5=2 \eta_{15} \eta_{16} \frac{\partial^2 Q_5}{\partial M \partial \theta}+ \eta_{15}^2 \frac{\partial^2 Q_5}{\partial  M^2 }+ \eta_{16}^2 \frac{\partial^2 Q_5}{\partial  \theta^2 },\\
	k_6&= 2 \eta_{16} \bigg( \eta_{11} \frac{\partial^2 Q_6}{\partial I_{h1} \partial \theta} + \eta_{15} \frac{\partial^2 Q_6}{\partial M \partial \theta } \bigg) + \eta_{16}^2 \frac{\partial^2 Q_6}{\partial  \theta^2 }.
\end{align*}
If $\mathcal{Q}_1\ne0$ and $\mathcal{Q}_2\ne0$  then using Sotomayor's theorem \cite{perko}, we find that model system undergoes saddle-node bifurcation near interior equilibrium $E^*$. The result is summarised as follows.

\begin{theorem}
	The model system \eqref{eqn2} undergoes saddle-node bifurcation at $E^*$ if there exists $\mu_1=\mu_1^*$ such that $\mathcal{Q}_1(\mu_{1}^*)\ne 0$ and  $\mathcal{Q}_2(\mu_{1}^*)\ne 0$.
\end{theorem}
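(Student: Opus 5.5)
The plan is to apply Sotomayor's theorem \cite{perko} directly to the vector field $Q(X,\mu_1)$ of the reduced system \eqref{eqn2} at the point $(E_{SN}^*,\mu_1^*)$. Sotomayor's theorem asks for four things. First, $Q(E_{SN}^*,\mu_1^*)=0$. Second, $J_{E_{SN}^*}$ has a simple zero eigenvalue with right eigenvector $\eta_1$ and left eigenvector $\eta_2$. Third, $\eta_2\cdot Q_{\mu_1}\neq 0$. Fourth, $\eta_2\cdot D^2Q(\eta_1,\eta_1)\neq 0$. The hypotheses $\mathcal{Q}_1(\mu_1^*)\neq0$ and $\mathcal{Q}_2(\mu_1^*)\neq0$ are exactly the last two conditions, so the work lies in verifying the first two and in checking the formulas for $\mathcal{Q}_1$ and $\mathcal{Q}_2$.

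\textbf{Step 1 (equilibrium).} $E_{SN}^*$ is by definition the interior equilibrium $E^*$ of the existence theorem for endemic equilibria, evaluated at $\mu_1^*$. Hence $Q(E_{SN}^*,\mu_1^*)=0$. Its components are those of $E^*$, with $\theta^*$ a root of $G(\theta)=0$ and $I_v^*$ a root of \eqref{poleqnIv}.

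\textbf{Step 2 (zero eigenvalue).} The constant term of \eqref{charceqn} is $M_6=\det J_{EE}$. So $M_6(\mu_1^*)=0$ gives $\lambda=0$ as an eigenvalue. Simplicity requires additionally that $M_5(\mu_1^*)\neq0$, which I will either impose or check. This is the implicit genericity behind the phrase ``simple eigenvalue 0''.

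Next I verify $\eta_1$ by solving $J\eta_1=0$ with the normalization $\eta_{16}=1$.
\begin{itemize}
\item Row 5 gives $\eta_{15}=-f_{56}/f_{55}$.
\item Row 6 gives $\eta_{11}$, using $f_{61}\neq0$, which holds since $\theta^*\in(0,1)$.
\item Row 2 gives $\eta_{12}=-\mu_1\eta_{11}/f_{22}$.
\item Row 1 gives $\eta_{14}$.
\item Row 3 expresses $\eta_{13}$ in terms of $\eta_{14}$.
\item Row 4 is then automatically consistent because $\det J=0$. Combining rows 3 and 4 reproduces the displayed formula for $\eta_{13}$.
\end{itemize}
Similarly, $\eta_2 J=0$ with $\eta_{24}=1$ is solved column by column:
\begin{itemize}
\item Column 3 gives $\eta_{23}=-f_{43}/f_{33}$.
\item Column 2 gives $\eta_{22}=-f_{12}\eta_{21}/f_{22}$.
\item Column 4 gives $\eta_{21}$.
\item Column 1 gives $\eta_{26}$.
\item Columns 5 and 6 together give $\eta_{25}$, with the remaining equation redundant.
\end{itemize}
All divisions are by $f_{22},f_{33},f_{14},f_{55},f_{61},f_{65}$, which are nonzero at an interior endemic equilibrium. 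Division by $f_{56}f_{65}-f_{55}f_{66}$ must also be assumed nonzero.

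\textbf{Step 3 (transversality).} Only $Q_1$ and $Q_2$ depend on $\mu_1$. We have $\partial Q_1/\partial\mu_1=-I_{h1}$ and $\partial Q_2/\partial\mu_1=I_{h1}$, so $\eta_2\cdot Q_{\mu_1}=(\eta_{22}-\eta_{21})I_{h1}^*=\mathcal{Q}_1$.

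For the second-order term, I expand $D^2Q(\eta_1,\eta_1)_i=\sum_{j,k}\partial^2Q_i/\partial x_j\partial x_k\,\eta_{1j}\eta_{1k}$ and keep only the nonvanishing second derivatives:
\begin{itemize}
\item $Q_2$ is linear, so its component is $0$.
\item $Q_5$ depends only on $(M,\theta)$.
\item $Q_6$ depends on $(I_{h1},M,\theta)$ and is linear in $I_{h1}$ and $M$.
\item $Q_1$, $Q_3$ and $Q_4$ are bilinear in host and vector variables and nonlinear in $\theta$ through $c$, $\alpha_v$ and $d_3$.
\end{itemize}
This yields exactly $k_1,0,k_3,k_4,k_5,k_6$, so $\eta_2\cdot D^2Q(\eta_1,\eta_1)=\mathcal{Q}_2$.

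\textbf{Conclusion.} With $\mathcal{Q}_1\neq0$ and $\mathcal{Q}_2\neq0$, Sotomayor's theorem gives the saddle-node bifurcation. The main obstacle is Step 2: guaranteeing that the zero eigenvalue is simple and that the denominators in $\eta_1,\eta_2$ do not vanish. These cannot be shown in general for this parameter-rich system. I would state them as nondegeneracy assumptions, specifically $M_5(\mu_1^*)\neq0$ and nonvanishing of $f_{44}f_{33}-f_{34}f_{43}$ and $f_{56}f_{65}-f_{55}f_{66}$, and confirm them numerically at the computed $\mu_1^*$.
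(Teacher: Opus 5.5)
Your proposal is correct and is essentially the paper's own argument. You apply Sotomayor's theorem at $(E_{SN}^*,\mu_1^*)$ with $M_6(\mu_1^*)=\det J_{EE}=0$, using the same null vectors $\eta_1,\eta_2$ normalized by $\eta_{16}=\eta_{24}=1$ and the same transversality quantities $\mathcal{Q}_1=(\eta_{22}-\eta_{21})I_{h1}^*$ and $\mathcal{Q}_2=\eta_2\cdot[k_1,0,k_3,k_4,k_5,k_6]^T$.

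Your added nondegeneracy requirements are exactly what the paper leaves implicit: $M_5(\mu_1^*)\neq0$ for simplicity of the zero eigenvalue, $f_{44}f_{33}-f_{34}f_{43}\neq0$, and $f_{56}f_{65}-f_{55}f_{66}=f_{56}f_{65}\neq0$ (since $f_{66}=0$ at $E^*$). For Perko's full conclusion, you should also require, as the paper does not, that the remaining five eigenvalues lie off the imaginary axis.
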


	The occurrence of saddle-node bifurcation is shown graphically in Figure \ref{fig:3}. The conditions are validated by numerically computing the value of $\mathcal{Q}_1=-39.74$ and $\mathcal{Q}_2=-4.92308665 \times 10^{-5}$ at $\mu_{1}^*=0.34347$.
    This shows that as the recovery rate of protected $h1$ decreases, the stable and unstable interior equilibrium points for the model system \eqref{eqn2} converge into a single stable point.
	
	\subsection{Hopf bifurcation}
	 Hopf bifurcation indicates the emergence and decay of periodic solutions due to small perturbations in the equilibrium point. We investigate the presence of the Hopf bifurcation in the model system \eqref{eqn2} by considering the perceived cost of ITN use, $m$, as the bifurcation parameter and considering other parameters constant.  
Let the critical value of $m$, denoted as $m^*$, be defined as the value at which $M_{5} \big(M_{1} M_{3} (M_{2} M_{4} - 3 M_{6}) - M_{1}^2 (M_{4}^2 - 2 M_{2} M_{6}) - M_{3}^2 M_{4} +M_{5} (M_{2} M_{3} + M_{1} (2  M_{4} - M_{2}^2))  - M_{5}^2\big)  + M_{6} M_{3} (M_{3}^2 + M_{1} (M_{1} M_{4} - M_{2} M_{3})) - M_{1}^3 M_{6}^2=0.$
Then, at $m=m^*$, the characteristic equation \eqref{charceqn} can be expressed as 
\begin{align*}
	&\left(\lambda^2 +\omega\right)\left(\lambda^4+M_1 \lambda^3+\left(M_2-\omega\right) \lambda^2 + (M_3 -\omega M_1)\lambda+ M_4 - M_2 \omega+ \omega^2 \right)=0,\\
	&\text{ with, } \omega=\big(u_2 + \sqrt{u_2^2-4u_1 M_1 M_6}\big)\frac{1}{2 u_1}, u_1=M_1 M_2-M_3,u_2=M_1M_4-M_5.
\end{align*}
Since $u_1>0$ from \eqref{cond_EE_stab1}, therefore, by Descartes Rule of signs, there exists a pair of purely imaginary roots $\lambda_{1,2}=\pm i \sqrt{\omega}$ if $u_2>0$. To show that a pair of complex conjugate eigenvalues cross the imaginary axis with non-zero rate, consider the value $m$ in the neighborhood of $m^*$. Let the roots of the characteristic equation with respect to $J_{EE}$ in this vicinity be denoted as $\lambda_{1,2}=\zeta(m) +i \xi(m)$. To establish the transversality condition, we substitute these into \eqref{charceqn} and find the derivative of the real and imaginary parts, respectively, given as
\begin{align*}
	& \quad \quad \quad \quad L_1 \dot{\zeta} -	L_2 \dot{\xi}+L_3=0,\quad \text{and} \quad
	L_2 \dot{\zeta} +	L_1 \dot{\xi}+L_4=0,\\	
	\text{where,}&\\
	L_1 &= 30 \zeta^4 \xi + 20 \zeta^3 M_1 \xi + 12 \zeta^2 M_2 \xi + 6 \zeta M_3 \xi + 2 M_4 \xi - 
	60 \zeta^2 \xi^3 - 20 \zeta M_1 \xi^3 - 4 M_2 \xi^3 + 6 \xi^5,\\
	L_2 &= 6 \zeta^5 + 5 \zeta^4 M_1 + 4 \zeta^3 M_2 + 3 \zeta^2 M_3 + 2 \zeta M_4 + M_5 - 
	60 \zeta^3 \xi^2 - 30 \zeta^2 M_1 \xi^2 - 12 \zeta M_2 \xi^2 \\&- 3 M_3 \xi^2 + 30 \zeta \xi^4 + 
	5 M_1 \xi^4,\\ 
	L_3 &= \zeta^5 \dot{M}_1 - 10 \zeta^3 \xi^2 \dot{M}_1 + 5 \zeta \xi^4 \dot{M}_1 + \zeta^4 \dot{M}_2 - 
	6 \zeta^2 \xi^2 \dot{M}_2 + \xi^4 \dot{M}_2 + \zeta^3 \dot{M}_3 - 3 \zeta \xi^2 \dot{M}_3 + 
	\zeta^2 \dot{M}_4 \\&- \xi^2 \dot{M}_4 + \zeta \dot{M}_5 +  \dot{M}_6,\\
	L_4 &= 5 \zeta^4 \xi \dot{M}_1 - 10 \zeta^2 \xi^3 \dot{M}_1 + \xi^5 \dot{M}_1 + 
	4 \zeta^3 \xi \dot{M}_2 -  4 \zeta \xi^3 \dot{M}_2 + 3 \zeta^2 \xi \dot{M}_3 - \xi^3 \dot{M}_3 + 
	2 \zeta \xi \dot{M}_4 + \xi \dot{M}_5.
\end{align*}
Note that here $\dot{x}=\diff{x}{m}$. Evaluating the above $L_i,i=1,\dots,4 $ at $m=m^*$ we get 
\begin{align*}
	\diff{\zeta}{m}\bigg\rvert_{m=m^*}=\frac{-(L_1 L_3+L_2 L_4)}{L_1^2 +L_2^2}\bigg\rvert_{m=m^*}.
\end{align*}
Hence, $\diff{\mathfrak{Re}(\lambda)}{m} \ne 0$ if $(L_1 L_3+L_2 L_4)\big\rvert_{m=m^*}\ne 0$, which results in following condition\\
$(2 M_4 \omega - 4 M_2 \omega^3 + 6 \omega^5) (\omega^5 \dot{M}_1 - \omega^3 \dot{M}_3 + 
\omega \dot{M}_5) + (M_5 - 3 M_3 \omega^2 + 5 M_1 \omega^4) (\omega^4 \dot{M}_2 - \omega^2 \dot{M}_4 + 
\dot{M}_6) \ne 0.$ The result is summarised as follows.

	\begin{theorem}
		At $m=m^*$, the system \eqref{eqn1} undergoes a Hopf bifurcation at $E^*$ if the following conditions are satisfied. 
		\begin{itemize}
			\item[i)]$M_{5} \big(M_{1} M_{3} (M_{2} M_{4} - 3 M_{6}) - M_{1}^2 (M_{4}^2 - 2 M_{2} M_{6}) - M_{3}^2 M_{4} +M_{5} (M_{2} M_{3} + M_{1} (2  M_{4} - M_{2}^2))  - M_{5}^2\big)  + M_{6} M_{3} (M_{3}^2 + M_{1} (M_{1} M_{4} - M_{2} M_{3})) - M_{1}^3 M_{6}^2=0.$
			\item[ii)] $(2 M_4 \omega - 4 M_2 \omega^3 + 6 \omega^5) (\omega^5 \dot{M}_1 - \omega^3 \dot{M}_3 + 
    \omega \dot{M}_5) + (M_5 - 3 M_3 \omega^2 + 5 M_1 \omega^4) (\omega^4 \dot{M}_2 - \omega^2 \dot{M}_4 + 
    \dot{M}_6) \ne 0,$\\
    $\omega=\big(u_2 + \sqrt{u_2^2-4u_1 M_1 M_6}\big)\frac{1}{2 u_1},\; u_1=M_1 M_2-M_3,\; u_2=M_1M_4-M_5>0$.
		\end{itemize}
	\end{theorem}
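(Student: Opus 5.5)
The plan is to verify the hypotheses of the Hopf bifurcation theorem (e.g.\ as in \cite{perko}) for system \eqref{eqn2} at $E^*$, with $m$ as the parameter. Three things are needed at $m=m^*$:
\begin{itemize}
\item[(a)] $J_{EE}$ has a pair of purely imaginary eigenvalues $\pm i\sqrt{\omega}$ with $\omega>0$;
\item[(b)] the remaining four eigenvalues have nonzero real parts;
\item[(c)] the real part of the critical pair crosses zero with nonzero speed, $\frac{d\,\mathfrak{Re}\lambda}{dm}\big|_{m^*}\neq 0$.
\end{itemize}
Throughout I would regard $E^*$ as a smooth function of $m$ near $m^*$. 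This follows from the implicit function theorem applied to the equilibrium equations, given that $M_6(m^*)\neq 0$, i.e.\ $\lambda=0$ is not a root. Consequently the coefficients $M_j(m)$ of \eqref{charceqn} are smooth, and the derivatives $\dot M_j$ make sense.

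\textbf{Step 1: condition (i) forces the factorisation.} Condition (i) says that the last Hurwitz determinant $\Delta_5$ of the degree-six polynomial vanishes. Orlando's formula expresses $\Delta_5$, up to sign, as the product of all sums $\lambda_j+\lambda_k$, $j<k$. So $\Delta_5=0$ forces two roots with $\lambda_j=-\lambda_k$.

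I would make this concrete by substituting $\lambda=i\nu$ and separating real and imaginary parts:
\[
-\nu^6+M_2\nu^4-M_4\nu^2+M_6=0,\qquad \nu\bigl(M_1\nu^4-M_3\nu^2+M_5\bigr)=0.
\]
Eliminating $\nu^4$ between these, with $\omega=\nu^2$, gives the quadratic
\[
u_1\omega^2-u_2\omega+M_1M_6=0 .
\]
This is equivalent to the stated polynomial identity under (i). Since $u_1>0$ (inherited from \eqref{cond_EE_stab1} holding just before the bifurcation) and $u_2>0$ by hypothesis, the root $\omega=\bigl(u_2+\sqrt{u_2^2-4u_1M_1M_6}\bigr)/(2u_1)$ is positive; Descartes' rule confirms a positive root exists. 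Polynomial division then produces the displayed factorisation $(\lambda^2+\omega)(\lambda^4+\dots)$.

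\textbf{Step 2: the remaining roots.} I would argue that the quartic cofactor has roots with nonzero (indeed negative) real part. The reason is that the other Hurwitz inequalities in \eqref{cond_EE_stab1} stay strict at $m^*$, so only one pair reaches the axis. This is the step where I expect some hand-waving: strictly, one must check that the quartic's Hurwitz determinants are positive. I would try to express them through $M_1,\dots,M_4$ and $\omega$ and use continuity from the stable side.

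\textbf{Step 3: transversality.} Write $\lambda=\zeta(m)+i\xi(m)$ and substitute into \eqref{charceqn}. Differentiating the real and imaginary parts with respect to $m$ gives the linear system
\[
L_1\dot\zeta-L_2\dot\xi+L_3=0,\qquad L_2\dot\zeta+L_1\dot\xi+L_4=0 .
\]
This system is solvable because $L_1^2+L_2^2\neq0$, which is exactly the simplicity of $i\sqrt\omega$ as a root. Cramer's rule then yields
\[
\dot\zeta=-\frac{L_1L_3+L_2L_4}{L_1^2+L_2^2}.
\]
Evaluating at $\zeta=0$, $\xi=\sqrt\omega$ reduces the $L_i$ to the expressions in condition (ii), up to the harmless factor $\sqrt\omega$ and a consistent power-of-$\omega$ rearrangement. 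Hence condition (ii) is precisely $\dot\zeta(m^*)\neq0$. The main obstacle is bookkeeping: confirming the algebraic reduction of $L_1L_3+L_2L_4$ at $\zeta=0$ to the stated form, and justifying simplicity of the imaginary pair. With (a)–(c) established, the Hopf theorem gives a one-parameter family of periodic orbits emanating from $E^*$ at $m=m^*$.
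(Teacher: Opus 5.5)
Your route is the paper's: factor the characteristic polynomial through $\lambda^2+\omega$, then differentiate the real and imaginary parts of $P(\zeta+i\xi;m)=0$ and solve by Cramer's rule. However, the step you deferred as bookkeeping does not go through.

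Let $P$ be the sextic in \eqref{charceqn}. Your linear system encodes $L_1+iL_2=P'(\lambda)$ and $L_3+iL_4=\partial_mP(\lambda):=\sum_{j=1}^{6}\dot M_j\lambda^{6-j}$. So your formula $\dot\zeta=-(L_1L_3+L_2L_4)/(L_1^2+L_2^2)$ is correct. At the actual eigenvalue $\zeta=0$, $\xi=\sqrt{\omega}$, its numerator is
\[
T(\omega):=\bigl(M_5-3M_3\omega+5M_1\omega^2\bigr)\bigl(\omega^2\dot M_2-\omega\dot M_4+\dot M_6\bigr)+\omega\bigl(2M_4-4M_2\omega+6\omega^2\bigr)\bigl(\omega^2\dot M_1-\omega\dot M_3+\dot M_5\bigr).
\]
The expression in condition (ii) is exactly $T(\omega^2)=\mathfrak{Re}\bigl[\partial_mP(i\omega)\,\overline{P'(i\omega)}\bigr]$. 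That is the same numerator evaluated at $\lambda=i\omega$, which is not an eigenvalue. The two differ by $\omega\mapsto\omega^2$ inside every factor, not by an overall power of $\omega$. Hence (ii) neither implies nor is implied by $\dot\zeta(m^*)\neq0$. The paper's derivation makes the same slip: it substitutes $\xi=\omega$ rather than $\xi=\sqrt\omega$ into the $L_i$. What your argument actually proves is a Hopf bifurcation under $T(\omega)\neq0$, not under (ii) as stated.

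Step 1 has a second problem, also shared with the paper. Condition (i) only tells you that the imaginary-part equation $M_1\omega^2-M_3\omega+M_5=0$ and your quadratic $u_1\omega^2-u_2\omega+M_1M_6=0$ share a root. Only that common root makes $\lambda^2+\omega$ divide $P$, and nothing forces it to be the ``$+$'' root. For example, $P(\lambda)=(\lambda^2+\omega_0)(\lambda+1)^4$ has $u_1=20$, $u_2=4+20\omega_0>0$ and $u_1\omega^2-u_2\omega+M_1M_6=(\omega-\omega_0)(20\omega-4)$. So for $0<\omega_0<1/5$ the stated formula returns $\omega=1/5$, and $\lambda^2+1/5$ does not divide $P$. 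Eliminating $\omega^2$ between the two quadratics gives the right value, $\omega=(u_1M_5-M_1^2M_6)/\Delta_3$ with $\Delta_3=M_3u_1-M_1u_2$.

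Descartes' rule cannot give existence either: the sign pattern $(+,-,+)$ allows zero or two positive roots. Orlando's formula with $\Delta_5=0$ also permits a quadruple $\pm a\pm ib$, which has no imaginary eigenvalue.

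Steps 1--3 are all repaired if you add the hypotheses $\Delta_1,\dots,\Delta_4>0$ and $M_6>0$ at $m^*$ and use Liu's Hurwitz-determinant criterion. Then there is exactly one simple imaginary pair and the other four roots lie in the open left half-plane, which settles your Step 2. By the Orlando formula you already invoked, near $m^*$ one has $\Delta_5=2\,\mathfrak{Re}\lambda(m)\,h(m)$ with $h(m^*)\neq0$. Transversality is therefore equivalent to $\frac{d\Delta_5}{dm}(m^*)\neq0$, with no need to compute $\omega$ at all.
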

Note that as $m$ passes through $m^*$, the endemic equilibrium loses stability and a family of periodic solutions emerges. As a result of this the disease prevalence and the ITN use no longer converge to steady-state values but instead exhibit sustained oscillations. These oscillations arise from the feedback between disease transmission and adaptive protection behaviour, whereby changes in disease prevalence influence ITN usage, which in turn alters transmission dynamics. The existence of a Hopf bifurcation is verified numerically in Figure \ref{fig:4}. For $m^*=10.83$ and other the parameter values are specified in Figure \ref{fig:4}, we obtain the value of $\diff{\zeta}{m}\bigg\rvert_{m=m^*}=0.03426 \ne 0.$ 
	
	\section{Numerical results}\label{numsec}
	In this section, we present the numerical simulations of the co-evolutionary game theoretic VBD model \eqref{eqn2} to validate the analytical findings of this study. We also investigate the dynamical behavior of the model system for various parameter values. The parameter values used for simulating model solutions are consistent with the biological ranges provided in previous studies.

	\subsection{Stability regions for disease-free equilibria}
	To study the switching of stability of DFE corresponding to zero to complete ITN coverage in the presence of vector population, we numerically simulate the model to study the equilibrium solution $\theta^*$ when there is no disease in the population, but vector population is present. Figure \ref{fig:1}(a)-(c) demonstrates the parameter ranges resulting in stability of $\theta^*=0,\theta^*=1$ and $0<\theta^*<1$ in the absence of VBD, for various combinations of perceived cost of ITN, $m$, and encounter rates of $h1$, $\mathcal{E}_{h1}$.  It indicates that at the extreme low value of $m$, the population comprises only ITN-using individuals, and at a high value of $m$, only ITN non-using individuals. Whereas ITN users and non-users coexist with each other for intermediate values of $m$. This is because larger values of $m$ increase the perceived cost of using ITN against the cost of vector density, hence $h1$ individuals choose to stop using ITN. Further, as  $\mathcal{E}_{h1}$ increases, the threshold value of $m$ below which $E_{04}$ ($\theta^*=1$)  is stable decreases, but the threshold value of $m$ above which $E_{03}$ ($\theta^*=0$) is stable increases. We also see that as $\sigma_{h2}$ increases, the size of the parameter region resulting in stability of $E_{05}$ decreases, and the stability region for $E_{04}$ increases. To understand such behavior, we study how the density of vector population is affected with varying $m$ and $\mathcal{E}_{h1}$ values in Figure \ref{fig:1} (d)-(f).  Here, the region inside the green line corresponds to stability of $E_{05}$. Within this region, the value of $M^*$ increases with $m$ and is unaffected by varying $\mathcal{E}_{h1}$. However, as $\mathcal{E}_{h1}$ increases, $M^*$ deccreases when $E_{04}$ is stable, and $M^*$ increases when $E_{03}$  is stable. When $\theta^*=0$ the vector gets to have undisruputed feeds, hence their density is the highest in this region. Moreover, from \eqref{eqnforav} it follows that $\alpha_{v}$ decreases as $\sigma_{h2}$ increases, which implies that the probability of vector bites on unprotected $h2$ increases. The undeterred feeding on the $h2$ leads to an increase in the density of the vector. Hence, as $\sigma_{h2}$ increases, we observe that the cost induced by $M^*$ increases, incentivizing the use of ITN.

		\subsection{ Behavior of endemic equilibrium under changing $R_0$ and $m$ for different feeding preference}
        We present the behavior of EE of the model system \eqref{eqn2} as a function of perceived cost of ITN, $m$, and basic reproduction number, $R_0$, with low and high innate vector preference for $h2$ in the left and right panels of Figure \ref{fig:2}, respectively. Here, $R_0$ is varied by changing $h1$ recovery rate, $\mu_1$. For lower values of $m$ in Figure \ref{fig:2}, $E_2^*$ ($\theta^*=1$) is stable as the perceived cost of ITN is negligible against VBD. Moreover, the high values of $m$ result in $E_1^*$ ($\theta^*=0$) to be stable. As $m$ increases, the threshold value of $R_0$ below which $h1$ individuals stop using ITN increases. We see that model solution $I_{h1}^*$ is more sensitive to $m$ when $\theta^*\in (0,1)$, and the influence of $R_0$  is higher when $\theta^*= 0$ or $\theta^*= 1$.
        
            In Figure \ref{fig:2}(a) as $R_0$ decreases and $m$ increases, the value of $\theta^*$ decrease and $I_{h2}^*$ and $I_v^*$ increase. This is because parameter ranges corresponding to  $\alpha_{v}>1$ in Figure \ref{fig:2}(c) indicate vector feeding preference for $h1$ in the left column. Since 
           $I_{h1}^*$ increases as $m$ increases, the secondary infections result in higher $I_{h2}^*$ and $I_{v}^*$ in Figure \ref{fig:2}(g) and Figure \ref{fig:2}(i), respectively. 
           In Figure \ref{fig:2}(g), $I_{h2}^*$ is higher in parameter region for $\theta^*=0$ whereas in Figure \ref{fig:2}(h)  $I_{h2}^*$ higher in region corresponding to $\theta^*=1$. This is because in Figure \ref{fig:2}(d), $\alpha_{v}<1$ hence vector prefer $h2$ in right column. When $h1$ is unavailable to feed due to complete ITN adherence, vector bites for $h2$ increase $I_{h2}^*$. However, in Figure \ref{fig:2}(j), the $I_{v}^*$ is highest when $\theta^*=0$.  Hence, we notice that when vector prefer $h2$, the prevalence can be the lowest among $h2$ and the vector population when ITN users and non-users coexist. 
           This Figure illustrates that when the innate vector preference for $h2$ is low, increasing ITN coverage among $h1$ by lowering the ITN cost can reduce overall prevalence, and reducing $R_0$ when the cost of ITN is high may not reduce overall prevalence. However, when the innate vector preference for $h2$ is high, reducing $R_0$ by increasing $h1$ recovery rate can reduce the prevalence among $h1$, but the prevalence among $h2$ might not reduce even with full ITN coverage among $h1$.
	
     
	\subsection{Saddle node bifurcation with respect to parameter $\mu_1$ }

Two endemic equilibrium points can collide and disappear when parameters cross a certain threshold. Figure \ref{fig:3} illustrates the bifurcation diagram for the model system \eqref{eqn2} undergoing saddle-node bifurcation with respect to recovery rate of $h1$, $\mu_1$. We observe that for a low value of $\mu_1$, only $E_1^*$ and $E_2^*$ exist. The equilibrium $E_1^*$ is always unstable, and  $E_2^*$ is stable for values of $\mu_1$ up to $\mu_1^* \approx 0.3435$. At $\mu_1^*$,  model system \eqref{eqn2} exhibits saddle-node bifurcation as $E_2^*$ becomes unstable and a stable interior equilibrium $E^*$ emerges for values of $\mu_1 >\mu_1^*$. We numerically validate the existence of the interior endemic equilibrium point $E^*$ for $\mu_1 >\mu_1^*$ in Figure \ref{fig:7}. It shows the existence of root $\theta^*\in (0,1)$ for equation \eqref{eqntheta:01}, for $\mu_1=0.42$ and other parameter values as in Figure \ref{fig:3}. 
  
	 In Figure \ref{fig:3}(a) we observe that as $\mu_1$ decreases, the value of $I_{h1}^*,\; I_{h2}^*$, and $I_{v}^*$ increase slowly and at $\mu_1^* \approx 0.3435$, the stable $E^*$ disappears, and an unstable and stable equilibrium point emerge via a saddle-node bifurcation.  Moreover, as the value of $\mu_1$ decreases beyond the critical value of $\mu_1^*$, the value of stable point $I_{h1}^*,\; I_{h2}^*$, and $I_{v}^*$ rise rapidly. Such a high disease transmission rate can make managing VBD challenging, as it strains healthcare services. This underscores the importance of early diagnosis and prompt treatment to improve the recovery rate of $h1$ above the critical threshold $\mu_1^*$. 
	
	\subsection{Hopf bifurcation behavior with respect to parameter $m$ }
 To investigate how the dynamics of the coevolutionary feedback game develop periodic outbreaks, we vary parameter $m$ and plot the bifurcation diagram in Figure \ref{fig:4}. For low values of $m$, the interior equilibrium is stable as high ITN coverage reduces prevalence. As $m$ increases, the value of $\theta^*$ from the stable interior equilibrium $E^*$ decreases, which increases the value of $I_{h1}^*$,$I_{h2}^*$, and $I_{v}^*$. Further, $E^*$ loses its stability at $m\approx 10.38$, via a Hopf bifurcation, which causes the previously stable solution to oscillate periodically. The model system \eqref{eqn2} admits oscillatory solutions for values of $m$ within the range of $(10.83,21.33)$. It indicates the possibility of periodic outbreaks, caused by oscillatory behavior in using ITN protection among $h1$ individuals when $m$ lies within the range of $(10.83,21.33)$. High $I_{h1}^*$ increases $\theta^*$, and high $\theta^*$ results in a reduced VBD risk. The resulting negative feedback loop reduces ITN use among $h1$ individuals, which causes prevalence to rise again. This is because the perceived cost of ITN and  VBD risk are comparable and the tradeoff between them oscillates for intermediate values of $m$ which lie within the range of $(10.83,21.33)$.  Moreover, for large values of $m$, the perceived cost of ITN becomes larger compared to its benefit, hence ITN coverage declines, which causes the prevalence to rise. At $m\approx 21.33$ the model system \eqref{eqn2} undergoes Hopf bifurcation, and the oscillations die out as the interior equilibrium $E^*$ regains its stability.
  	
		\subsection{Effect of  $\mathcal{E}_{h1}$ on bifurcation structure}
	
 The unified framework for the occurrence of saddle-node and Hopf bifurcation in the model system \eqref{eqn2} as the encounter rate $\mathcal{E}_{h1}$ varies, is presented in Figure \ref{fig:6}.
At $\mathcal{E}_{h1}^* \approx 0.932$, the system undergoes a saddle-node bifurcation, where the stable interior equilibrium $E^*$ splits and gives rise to a pair of stable and unstable equilibria. When $\mathcal{E}_{h1}$ is below the critical value $\mathcal{E}_{h1}^*$, the prevalence level of stable equilibrium point $I_{h1}^*, I_{h2}^*$ and $I_{v}^*$ increase. However, for $\mathcal{E}_{h1} > \mathcal{E}_{h1}^*$, the prevalence values decreases slightly in $h2$ while $I_{h1}^*$ and $I_v^*$ remain nearly constant. As $\mathcal{E}_{h1}$ is increased further, the stable interior equilibrium point loses its stability at $\mathcal{E}_{h1} \approx 1.863$ and gives rise to a periodic solution via Hopf bifurcation. As we increase $\mathcal{E}_{h1}$ above the value of Hopf bifurcation point, the amplitude of oscillations increases with an increase in $h1$ encounter rate. Hence, high encounters with host $h_1$ {\em can} qualitatively alter the system dynamics. While moderate changes in $\mathcal{E}_{h1}$ only affect equilibrium prevalence levels, sufficiently large values can generate and amplify temporal fluctuations in disease prevalence. Such oscillatory dynamics hinder effective disease control by producing recurrent outbreaks with self-sustained fluctuations in both host and vector prevalence.

\section{Conclusion}\label{conclu}

In this work, we developed and analyzed a coevolutionary game-theoretic model for a vector-borne disease (VBD) and ITN use involving two host species and a vector population. An important feature of our model is that we derive vector feeding preference as a function of ITN coverage, allowing vector biting behavior to emerge directly from host protective behavior. By coupling this adaptive vector response with an evolutionary game describing ITN use by the preferred host, the model captures the feedback between disease prevalence, vector behavior, and host decision-making regarding ITN use, and reveals how their interactions shape disease outcomes.

Model analysis established threshold conditions for disease extinction and persistence through the reproduction numbers $R_0$ and $R_c$, and characterized the existence and stability of disease-free and endemic equilibria. Numerical simulations revealed that high ITN coverage can shift vector feeding preference towards $h_2$, which increases vector density and enlarges the parameter region in which complete ITN compliance remains stable, even for relatively high perceived costs of ITN use. However, while increased ITN coverage can substantially reduce prevalence in the protected host $h_1$, transmission may persist at relatively high levels in $h_2$ as vector feeding shifts toward the unprotected alternative host population.

The coupled disease-behavior system exhibits rich nonlinear dynamics. Using Sotomayor's theorem, conditions for the occurrence of saddle-node bifurcation were established and subsequently verified numerically. The bifurcation analysis showed the emergence of a stable endemic equilibrium from the bifurcation point and revealed that, beyond the critical recovery rate $\mu_1^*$, further increases in the recovery rate of $h_1$ lead to only marginal reductions in disease prevalence. This result highlights that the interventions focused solely on enhancing the host recovery rate may have a limited impact beyond the critical threshold.

The model was also shown to undergo Hopf bifurcation, giving rise to sustained periodic oscillations in both disease prevalence and ITN coverage. Numerical results demonstrated that increasing the perceived cost of ITN use can destabilize an otherwise stable endemic equilibrium and generate recurrent disease outbreaks through oscillatory behavioral feedbacks. Furthermore, variations in the encounter rate of $h_1$ were found to induce both saddle-node and Hopf bifurcations in the system. Low encounter rates promote higher disease prevalence through critical transitions associated with saddle-node bifurcation, whereas sufficiently high encounter rates generate large-amplitude oscillations in prevalence and ITN coverage. Our findings highlight that the coupling between interventions and behavioral factors can drive qualitative transitions between distinct dynamical regimes. 

Overall, this study provides a first step toward understanding the coupled effects of adaptive host behavior and vector feeding adaptation on the control of vector-borne diseases. Our results demonstrate how these interacting feedbacks can give rise to critical transitions and sustained oscillations, revealing rich dynamical behaviour that cannot be captured by models that ignore behavioral adaptation. We have also shown that although increased ITN coverage reduces prevalence in the protected host, adaptive shifts in vector feeding behavior can maintain high prevalence in the alternative host, potentially undermining overall disease control. This finding may have practical importance in the implementation and assessment of long-term control strategies for VBD. Future work should investigate how seasonal and environmental variability affects these feedbacks, and determine the conditions under which such variability amplifies or dampens recurrent disease outbreaks.

\appendix
\section{Impact of ITN on vector death rate and vector preference} \label{detail_appen}
Following from Stone {\em et al.} \cite{stone2018evolution}, we find dependence of daily vector mortality rate in the presence of ITN coverage. Suppose the conditional probability that vector feeds on host $h2$ after locating it is given as $\varphi=\frac{\sigma_{h2} \mathcal{E}_{h2}}{\sigma_{h1} \mathcal{E}_{h1}+ \sigma_{h2} \mathcal{E}_{h2}}$. Then the probability that the vector survives feeding on $h1$ and $h2$ is $p_{h1}=1-\nu \varphi^{\chi}$ and $p_{h2}=1-\nu (1-\varphi)^{\chi}$. Here, parameter $\nu$ is the maximum mortality rate due to host defense, and $\chi$ is the inverse of trade-off strength.
	Let the probability of a vector locating a host be denoted by $F=1-e^{-(\sigma_{h1} \mathcal{E}_{h1}+ \sigma_{h2} \mathcal{E}_{h2})}.$ If the vector survives a single foraging bout with probability $p_f$, then the probability of successfully locating a host in a single foraging attempt is $F p_f$. Now, let $r$ be the probability that the vector enters ITN and is repelled, $s$ be the probability that the vector is not killed by insecticide and manages to successfully feed on host $h1$. Then, the probability of the vector dying upon contact with ITN is $1-r-s$. Hence we can find the probability of vector acquiring a meal in one feeding attempt by considering the possibility that it can bite $h2$ with probability $p_f F \varphi p_{h2}$ or by biting an unprotected $h1$ host with probability $p_f F(1-\varphi)(1-\theta) p_{h1}$ or by biting a protected $h1$ host with probability $p_f F(1-\varphi) \theta (1-r) s p_{h1}$. Following from \cite{le2007elaborated,stone2018evolution}, the probability of successfully feeding in the presence of ITN is
	$W=p_f F(\varphi p_{h2}+ (1-\varphi)(1-\theta) p_{h1}+(1-\varphi) \theta (1-r) p_{h1} s).$
	Now, if the probability that the vector repeats or returns for another foraging bout is
	$\rho=(1-F) p_f +F p_f (1-\varphi) \theta r.$
	Then, the expected time to feed in a population where ITNs are used is the baseline time multiplied by the number of attempts required to
	complete a feeding cycle, $1/(1-\rho)$. Thus, the time to complete one feeding cycle is $\frac{1}{c}=(\frac{\tau_f}{1-\rho}+\tau_2).$ 

During a feeding attempt in the presence of ITN, the vector can succeed, die due to insecticide on the net, or be repelled and try again. Therefore, the probability of the vector surviving the feeding stage in the presence of ITN is found as $s_n=W+W\rho+W \rho^2+\dots=\frac{W}{1-\rho}$. Now, if the probability of the vector surviving the resting stage is $s_r$, then the total probability that the vector survives the foraging and resting stage of the complete feeding cycle is \[P_{fc}=(s_n s_r)^c.\]
Given the survival probability, we find the death rate as 
\[d_3=- \log(P_{fc})=- c \log(s_n s_r). \] 
Since $0 \le s_n s_r \le 1$, using the Taylor series expansion, we get
\[ d_3 \le  c \Big( \frac{1}{s_n s_r} -1 \Big). \]
Hence, we consider the following
\[ d_3\approx  \frac{c}{2} \Big( \frac{1}{s_n s_r} -1 \Big).\]

Now, we present the derivation of vector feeding preference under ITN use by $h1$. Following \cite{shravani&sharma} we consider the following
\[ \alpha_v =\frac{\text{Proportion of feeds on } h1}{\text{Proportion of feeds on } h2 } \times \frac{N_{h2}}{N_{h1}}.\]
When ITN is used by $h1$, a larger proportion of bites are on $h2$. Moreover, let $q(\theta)$ be the overall probability that the vector feeding trial will end up biting on $h1$. This is possible if the vector feeds on a non-protected $h1$, or successfully bites a protected $h1$, or has feeding success on protected $h1$ after attempting multiple times \cite{le2007elaborated}.
 \[q(\theta)= (1-\varphi) (1-\theta + s \theta ) +\rho q(\theta)
\implies q(\theta)= \frac{  (1-\varphi) (1-\theta + s\theta ) }{1-\rho  }.\]
Since $s_n(\theta)= W/(1- \rho)$ denotes the probability of the vector surviving foraging in the presence of ITN. Following from \cite{le2007elaborated} we can get the proportion of feeds taken on $h1$ as $$ \frac{q(\theta)}{s_n(\theta)}=\frac{(1-\varphi ) (1-\theta + s\theta )}{W}.$$
Consequently, the vector preference is given as follows
\[ 	\alpha_v(\theta)=\frac{(1-\varphi) (1-\theta+ s\theta )N_{h2} }{ \varphi W(\theta) N_{h1}}. \]

Moreover, we see that $\frac{\partial \alpha_v}{\partial \theta}<0$ iff $(1-\varphi) p_{h1} s r<\varphi p_{h2}(1-s)$. This indicates that feeding preference for $h1$ will reduce with increasing ITN coverage only if the resulting reduction in the success of biting protected $h1$, $(1-s)$, and getting diverted to $h2$ is stronger than the residual success in feeding $h1$ ($rs$). 

Following from \cite{stone2018evolution}, throughout our study, we assume the values of the above parameters as $\tau_f=0.1,\tau_2 =2.5,\sigma_{h1}=1,f_v=5,p_f=0.95,s_{r}=0.95,s=0.1,r=0.6,\chi=1,\nu=0.5.$

\section{Next generation matrix method for finding basic reproduction number}\label{R0NGM} 

The matrix $J_{DFE}$ can be decomposed as $J_{DFE}=T+\Sigma$. Here, $T$ accounts for VBD transmission terms and $\Sigma$ accounts for internal transitions such as mortality and recovery from VBD. Then the next generation matrix is $NGM=  -T\Sigma^{-1}$, where
\begin{align*}
 T &=\begin{bmatrix}
		0 &           0& \frac{N_{h1}  \beta_{hv} \alpha_{v}^* c ^*}{(\alpha_{v}^* N_{h1} + N_{h2})}& 0 & 0\\
		0&     0 & \frac{N_{h2}  \beta_{hv} c^*}{(\alpha_{v}^* N_{h1} + N_{h2})}& 0& 0\\
		\frac{K(\theta^*) \beta_{vh} \alpha_{v}^* c^*}{(\alpha_{v}^* N_{h1} + N_{h2})}&   \frac{K(\theta^*) \beta_{vh}  c^*}{(\alpha_{v}^* N_{h1} + N_{h2})}& 0& 0 & 0\\
		0&    0& 0& 0& 0\\
		0&    0& 0&0&  0\\
	\end{bmatrix},\\
\Sigma^{-1}&=	\begin{bmatrix}
                 \frac{-1}{(\mu_1+d_1)}&    0&    0&                   0&                   0\\
          0& \frac{-1}{(\mu_2+d_2)}&    0&                   0&                   0\\
          0&    0& \frac{-1}{d_3}&                   0&                   0\\
          \frac{-C_1 p \theta^* (1-\theta^*) (f_v c^{\prime}-d_3^{\prime})K(\theta^*)}{(\mu_1+d_1) a_1}&    0&    0&   \frac{ p(1-2\theta^*)(C_2  K(\theta^*)-m)}{a_1} &  \frac{-(f_v c^{\prime}-d_3^{\prime})K(\theta^*)}{a_1}\\
          \frac{-C_1 p \theta^* (1-\theta^*) (b_3-d_3)}{(\mu_1+d_1) a_1}&    0&    0&  \frac{-p\theta^*(1-\theta^*)C_2}{a_1} &   \frac{-(b_3-d_3)}{a_1}
	\end{bmatrix},\\
  \text{where, } & a_1=-(b_3-d_3) p(1-2\theta^*)(C_2  M^*-m)-p\theta^*(1-\theta^*)C_2 (f_v c^{\prime}-d_3^{\prime})M^*.
\end{align*}
We get the following 
\begin{align*}
NGM &=\begin{bmatrix}
   0&      0& \frac{N_{h1}  \beta_{hv} \alpha_{v}^* c ^*}{(\alpha_{v}^* N_{h1} + N_{h2})d_3^*}& 0& 0\\
0&      0& \frac{N_{h2}  \beta_{hv}  c^*}{(\alpha_{v}^* N_{h1} + N_{h2})d_3^*}& 0& 0\\
     \frac{K(\theta^*) \beta_{vh} \alpha_{v}^* c^*}{(\alpha_{v}^* N_{h1} + N_{h2}) (\mu_1+d_1)}&    \frac{K(\theta^*) \beta_{vh}  c^*}{(\alpha_{v}^* N_{h1} + N_{h2}) (\mu_2+d_2)}&      0& 0& 0\\
0&      0&      0& 0& 0\\
0&      0&      0& 0& 0\\
\end{bmatrix}.
\end{align*}
The dominant eigenvalue of NGM is $\Lambda=\sqrt{ \frac{\beta_{hv}\beta_{vh}c^{2}(\theta^*) K(\theta^*)}{(\alpha_{v}(\theta^*) N_{h1}+N_{h2})^2 d_3(\theta^*)}\left(\frac{\alpha_{v}^{2}(\theta^*) N_{h1}}{(\mu_{1} + d_1 )} +\frac{N_{h2}}{(\mu_2+d_{2})}\right)}.$ Hence DFE is stable if $$R_c(\theta^*)=\frac{\beta_{hv}\beta_{vh}c^{2}(\theta^*) K(\theta^*)}{(\alpha_{v}(\theta^*) N_{h1}+N_{h2})^2 d_3(\theta^*)}\left(\frac{\alpha_{v}^{2}(\theta^*) N_{h1}}{(\mu_{1} + d_1 )} +\frac{N_{h2}}{(\mu_2+d_{2})}\right)>1.$$

\section{Coefficients of characteristic equation for $J_{EE}$}\label{charceqn1}
		\begin{align}
		\begin{split}
			M_1&= - f_{11} - f_{22} - f_{33} - f_{44} - f_{55}-f_{66},\\ 
			M_2&= f_{11}f_{22} + (f_{11} + f_{22})f_{33}  + (f_{11}+ f_{22}  + f_{33})f_{44}  + ( f_{11}+f_{22} + f_{33}+ f_{44})f_{55} \\
			&+  ( f_{11}+f_{22} + f_{33}+ f_{44}+f_{55}) f_{66}- f_{12}\mu_1 - f_{34}f_{43}- f_{56}f_{65}-f_{14}f_{41}-f_{16}f_{61},\\ 
			M_3&=(f_{12} \mu_1+f_{14} f_{41}+f_{34} f_{43}-f_{11} f_{22}-(f_{11}+f_{22}) (f_{33}+f_{44})-f_{33} f_{44})(f_{55}+f_{66})  \\
			&  
			+ (f_{56} f_{65} -f_{55}  f_{66})(f_{44}+f_{33}+f_{22}+f_{11}) 
			+(f_{12} \mu_1 - f_{11} f_{22}   + f_{16} f_{61} )(f_{44}+f_{33}) \\
			&+(f_{34} f_{43}-f_{33} f_{44})(f_{11}+f_{22}) + f_{14} f_{41}(f_{33} +f_{22}) +  - f_{14} f_{46} f_{61} + f_{16} f_{61} (f_{55} +f_{22} ),\\
			M_4&=f_{16}f_{34}f_{43}f_{61}-f_{14}(f_{36}f_{43}+f_{45}f_{56})f_{61}+(f_{14}f_{41}+f_{34}f_{43})f_{56}f_{65}+f_{12}\mu_1(f_{34}f_{43}+f_{56}f_{65})
			\\&+f_{22}(f_{14}f_{46}f_{61}-f_{11}(f_{34}f_{43}+f_{56}f_{65}))+f_{33}(-f_{22}(f_{14}f_{41}+f_{16}f_{61})+f_{14}f_{46}f_{61}-(f_{11}+f_{22})f_{56}f_{65}) \\&+f_{44}((f_{11}f_{22}-f_{12}\mu_1)f_{33}-(f_{22}+f_{33})f_{16}f_{61}-(f_{11}+f_{22}+f_{33})f_{56}f_{65})+f_{55} f_{61}(f_{14}f_{46}\\&-f_{16}(f_{22}+f_{33}+f_{44}))+(f_{11}f_{22}+(f_{11}+f_{22}f_{33}+(f_{11}+f_{22}+f_{33)}f_{44}-f_{12}\mu_1-f_{14}f_{41}-\\&f_{34}f_{43})f_{55}f_{66}+(f_{55}+f_{66})((-f_{12}\mu_1+f_{11}f_{22})f_{33}-f_{14}(f_{22}+f_{33})f_{41}-(f_{11}+f_{22})f_{34}f_{43}\\&+f_{44}(-f_{12}\mu_1+f_{11}f_{22}+(f_{11}+f_{22})f_{33})),\\
			M_5&= (f_{55}+f_{66})(f_{14}f_{22}f_{33}f_{41}+(f_{12}\mu_1-f_{11}f_{22})(f_{33}f_{44}-f_{34}f_{43}))+(((f_{12}\mu_1-f_{11}f_{22})(f_{33}+f_{44})\\&+f_{14}f_{41}(f_{22}+f_{33})+(f_{11}+f_{22})(f_{34}f_{43}-f_{33}f_{44}))f_{66}+(f_{16}(f_{22}f_{33}-f_{34}f_{43}+(f_{22}+f_{33})f_{44})\\&+f_{14}(f_{36}f_{43}-(f_{22}+f_{33})f_{46}))f_{61})f_{55}+(f_{11}f_{22}-f_{12}\mu_1)f_{56}f_{65}(f_{33}+f_{44})+f_{44}f_{33}(f_{16}f_{22}f_{61}\\&+(f_{11}+f_{22})f_{56}f_{65})+f_{33}f_{14}(f_{56}(f_{61}f_{45}-f_{41}f_{65})-f_{22}f_{46}f_{61})\\&+f_{22}((f_{14}(f_{36}f_{43}+f_{56}(f_{45}-f_{41}f_{65}))-f_{16}f_{34}f_{43})f_{61})-(f_{11}+f_{22})f_{34}f_{43}f_{56}f_{65}\\
			M_6&=f_{14}f_{22}(-f_{36}f_{43}f_{55}f_{61}+f_{33}(f_{55}(f_{46}f_{61-}f_{41}f_{66})+f_{56}(f_{41}f_{65}-f_{45}f_{61})))\\&+(f_{34}f_{43}-f_{33}f_{44})(f_{16}f_{22}f_{55}f_{61}+(f_{11}f_{22}-f_{12}\mu_1)(f_{56}f_{65}-f_{55}f_{66})).
		\end{split}
	\end{align}
Note that at $\theta^*\in (0,1)$ we get $f_{66}=0$, which implies $M_1>0$.
    
\bibliography{sn-bibliography}

\section{Figures}\label{figureapp}
	\begin{figure}[t]
		\centering
	\includegraphics[width=0.8\textwidth]{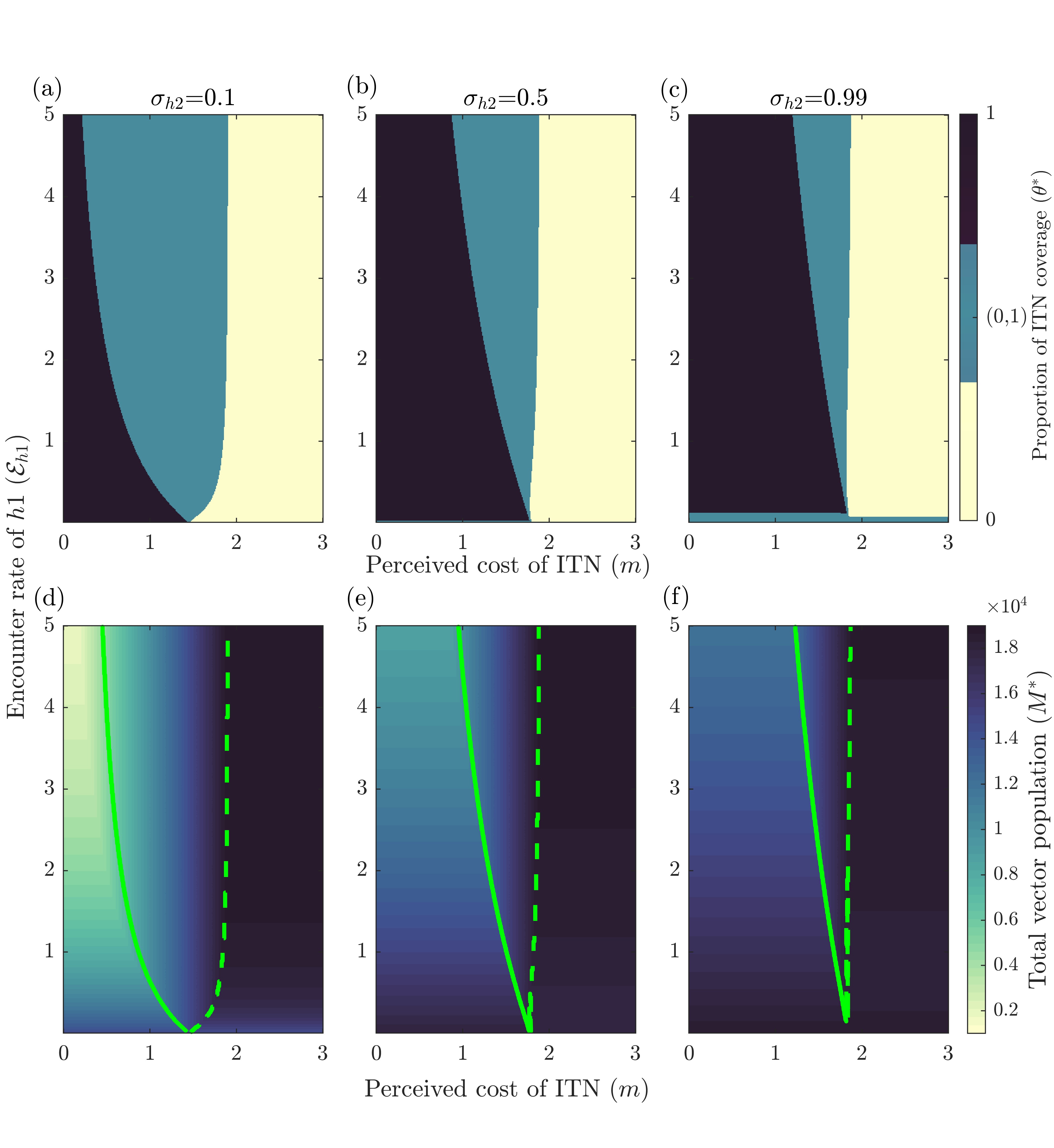}
		\caption{Impact of varying $\mathcal{E}_{h1}$ and $m$ on (a)-(c) strategy composition of $h1$ population, and (d)-(f) size of vector population, under different values of $\sigma_{h2}$. In (a)-(c) cream white and black colored regions denote the stability of DFE corresponding to $E_{03}$ and $E_{04}$. Light blue shaded region represents stability of  $E_{0 5}$, and region corresponding to it in (d)-(f) is denoted inside green line. That is, region below solid and above dashed curve bound the region corresponds to stability of $E_{04}$ and $E_{03}$, respectively. The value of $\alpha_v,c,d_3,b_3$ are varying as they are functions of $\mathcal{E}_{h1}$ and $\sigma_{h2}$. Other parameters are kept constant at $C_{1}=0.1,	C_{2}=0.0001,p=0.1, b_{31}=0.0001,\mathcal{E}_{h2}=1,\beta_{hv}=0.1,\beta_{vh}=0.2,\delta_{1}=0.01,\mu_2=0.01,d_{1}=0.1,d_{2}=0.5,\mu_{1}=0.5,N_{h1}=4000,N_{h2}=2000$. }
		\label{fig:1}
	\end{figure}

    \begin{figure}[htbp]
	\centering
	\includegraphics[width=1\textwidth]{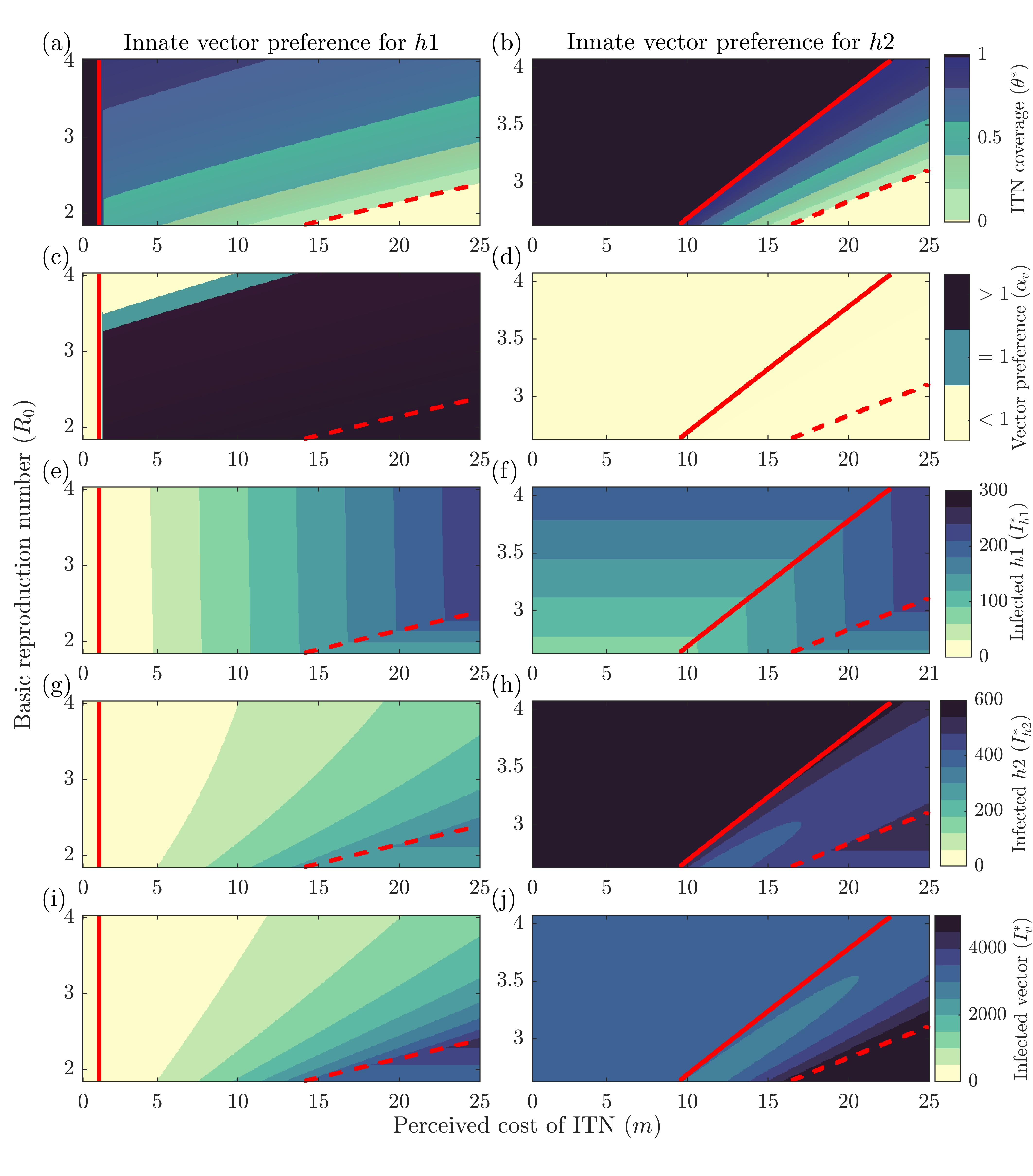}
	\caption{Relation between long-term behavior of model solutions and $R_0$ values for different values of $m$. Here, $R_0$ is calculated by varying $\mu_1$ from  0.5 to 2 and keeping $\theta=0$. 
    Panels demonstrate the following: 
    (a)-(b) equilibrium proportion of ITN coverage in $h1$, (c)-(d) equilibrium values of vector preference, 
    (e)-(f) equilibrium values of infected $h1$, 
    (g)-(h) equilibrium values of infected $h2$ and 
    (i)-(j) equilibrium values of infected vector.  Value of $\sigma_{h2}$ is $0.4$ and $0.7$ for figures corresponding to left and right panels. Innate vector preference without any ITN intervention depends on value of $\alpha_v(\theta)$ at $\theta=0$. In the left and right panels, we have $\alpha_v(0)=1.4628>1$ and $\alpha_v(0)= 0.8093<1$, that is, the innate vector preference is for $h1$ and $h2$ in the left and right panels respectively.
    Region below solid curve and above dashed curve corresponds to $\theta^*=1$ and $\theta^*=0$, respectively; the intermediate region satisfies $0<\theta^*<1$.  Parameter values used are $\beta_{hv}=0.5,\beta_{vh}=0.5,\delta_{1}=0.1,\mu_2=0.1,d_{1}=0.1,d_{2}=0.2,N_{h1}=3000,N_{h2}=1000,\mathcal{E}_{h1}=1$.  Other parameters were as in Figure \ref{fig:1}.}
	\label{fig:2}
\end{figure}

	\begin{figure}[htbp]
		\centering
		\includegraphics[width=0.8\textwidth]{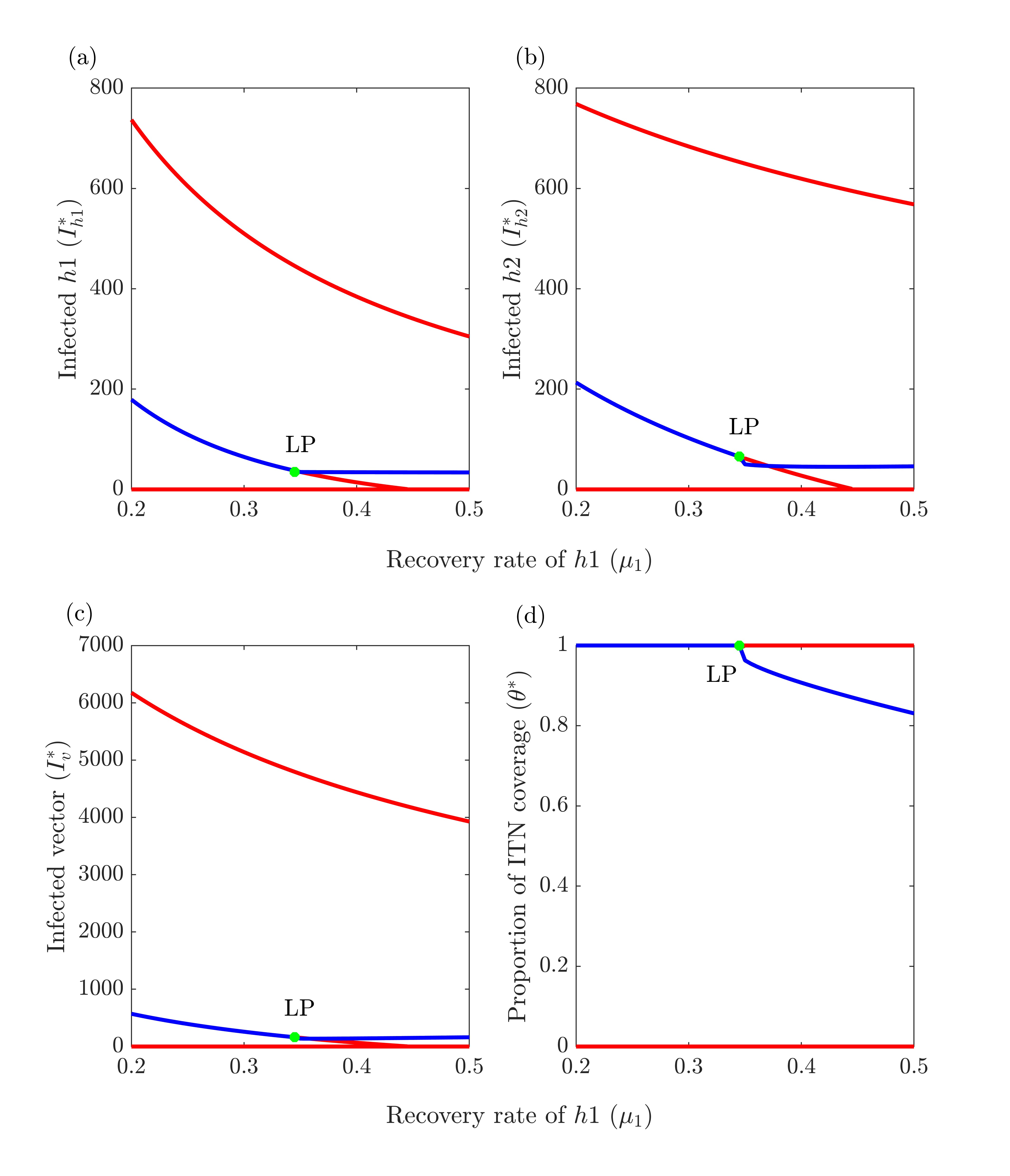}	
		\caption{Bifurcation diagram with respect to recovery rate of $h1$ indicating occurrence of saddle-node bifurcation at LP ($\mu_1 \approx 0.3435$).  Blue and red lines denote the stable and unstable values of equilibrium. The parameter values used were $m=5,\sigma_{h2}=0.5, \mathcal{E}_{h1}=1,\beta_{hv}=0.5,\beta_{vh}=0.5,\delta_{1}=0.01,\mu_2=0.01,d_{1}=0.05,d_{2}=0.2,N_{h1}=4000,N_{h2}=2000$ and rest parameter values are as in Figure \ref{fig:1}.}
		\label{fig:3}
	\end{figure}

	\begin{figure}[htbp]
		\centering
		\includegraphics[width=0.8\textwidth]{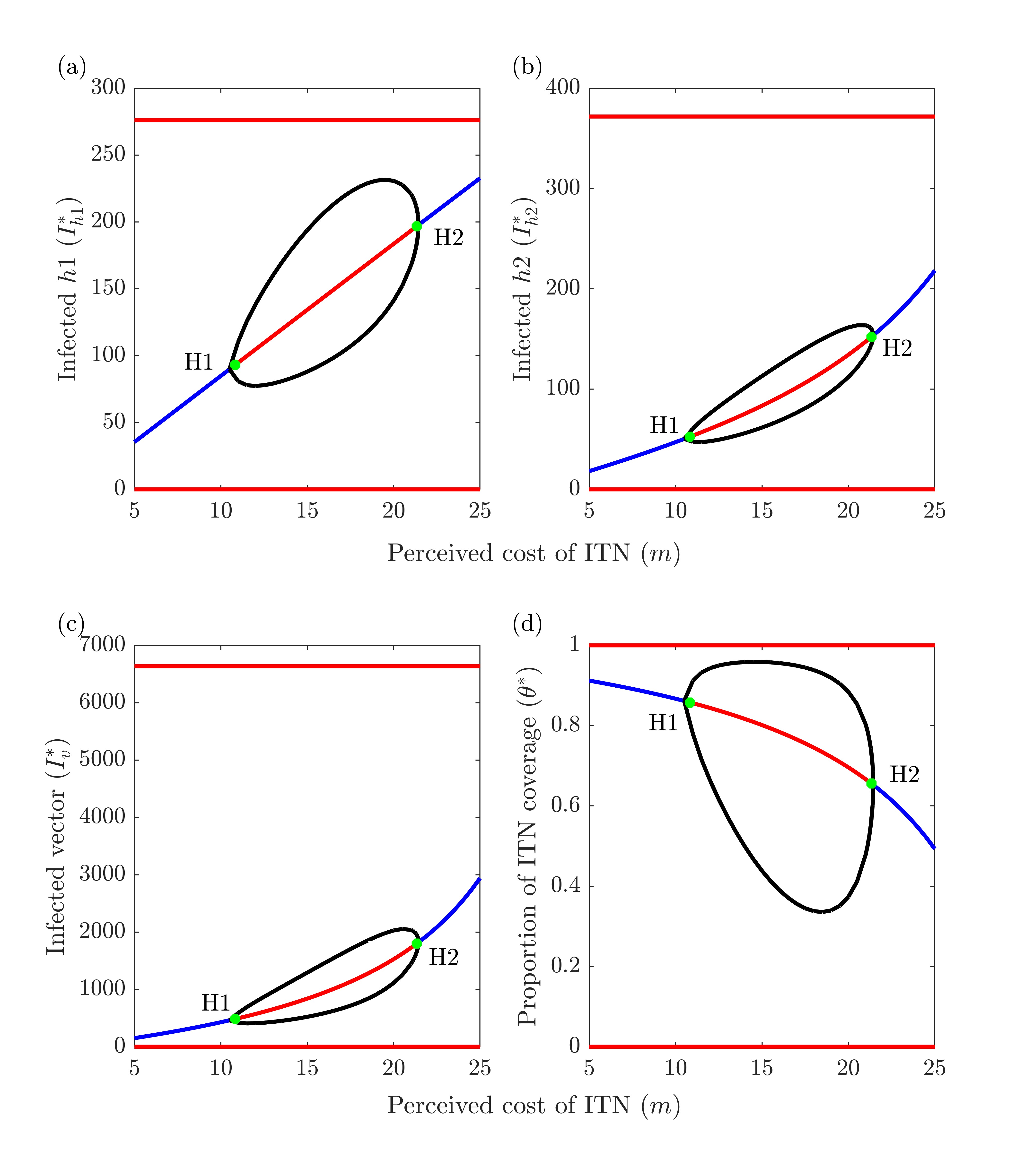} 
		\caption{ Bifurcation diagram with respect to cost of ITN ($m$). Hopf bifurcation is detected at H1 ($m \approx 10.83$) and H2 ($m \approx 21.33 $). The first lyapunov exponent of H1 and H2 are $-2.41 \times 10^{-5}$ and $5.15 \times 10^{-5}$.  Blue and red lines indicate the stable and unstable values of equilibrium.  Black line denotes the maximum and minimum values of periodic solutions. Parameter values used are $\sigma_{h2}=0.3,\beta_{hv}=0.5,\beta_{vh}=0.8,\delta_{1}=0.01,\mu_2=0.01,d_{1}=0.05,d_{2}=0.3,\mu_{1}=0.5,N_{h1}=3000,N_{h2}=1000,\mathcal{E}_{h1}=1$ and other parameters as in Figure \ref{fig:1}.}
		\label{fig:4}
	\end{figure}

	\begin{figure}[htbp]
		\centering 
		\includegraphics[width=0.8\textwidth]{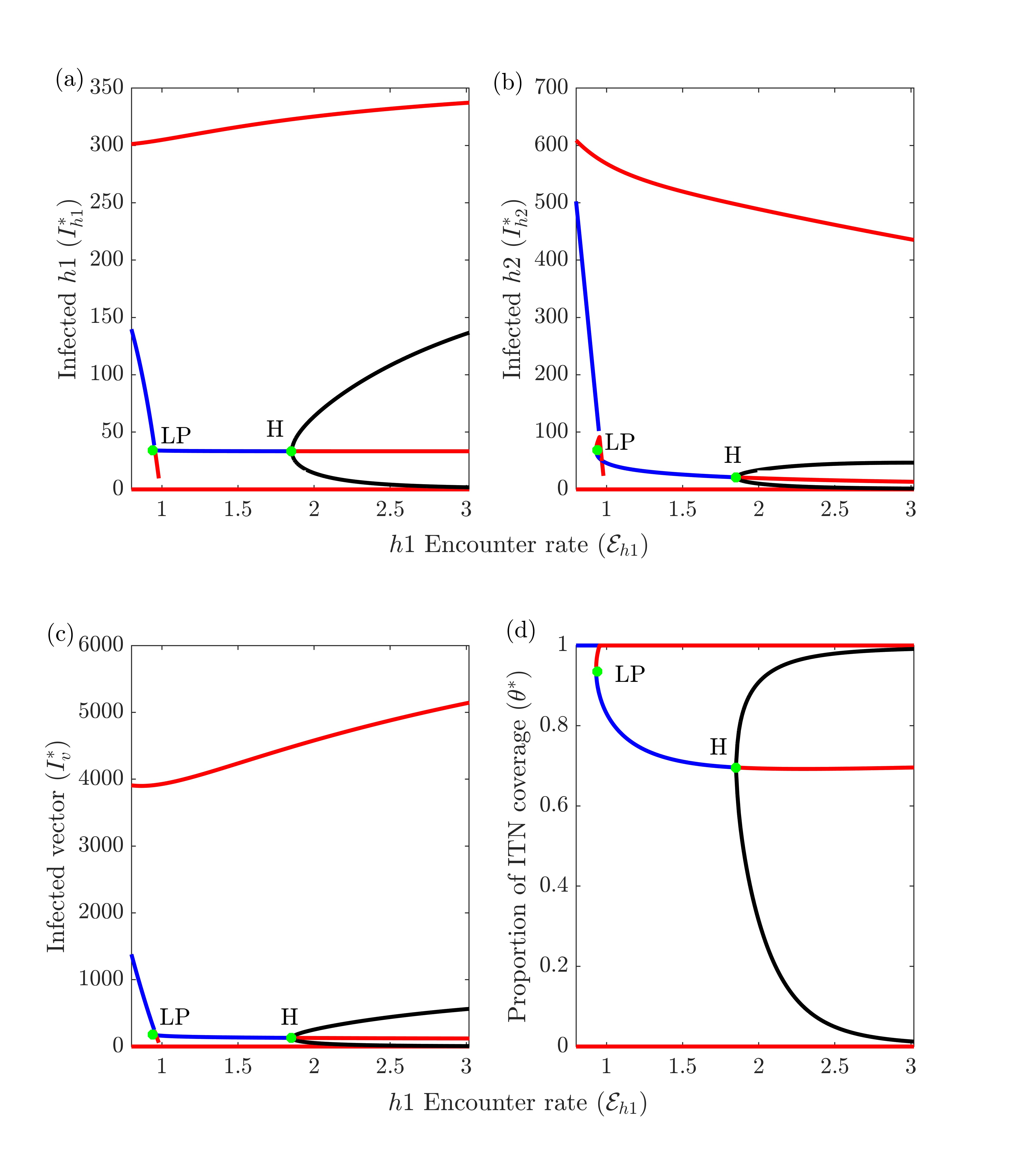}
		\caption{ Bifurcation diagram showing the endemic equilibrium points for system by varying $\mathcal{E}_{h1}$. Saddle-node and Hopf  bifurcation detected at LP ($\mathcal{E}_{h1} \approx 0.932$) and H 
			($\mathcal{E}_{h1} \approx 1.863$).  Blue and red lines represent the stable and unstable values of equilibrium.  Black line denotes the maximum and minimum values of periodic solutions.  Parameters used are $m=5,\sigma_{h2}=0.5,\beta_{hv}=0.5,\beta_{vh}=0.5,\delta_{1}=0.01,\mu_2=0.01,d_{1}=0.05,d_{2}=0.2,\mu_{1}=0.5,N_{h1}=4000,N_{h2}=2000$ and other parameters as in Figure \ref{fig:1}.}
	\label{fig:6}
\end{figure}

	 	\begin{figure}[htbp]
		\centering 
		\includegraphics[width=0.6\textwidth]{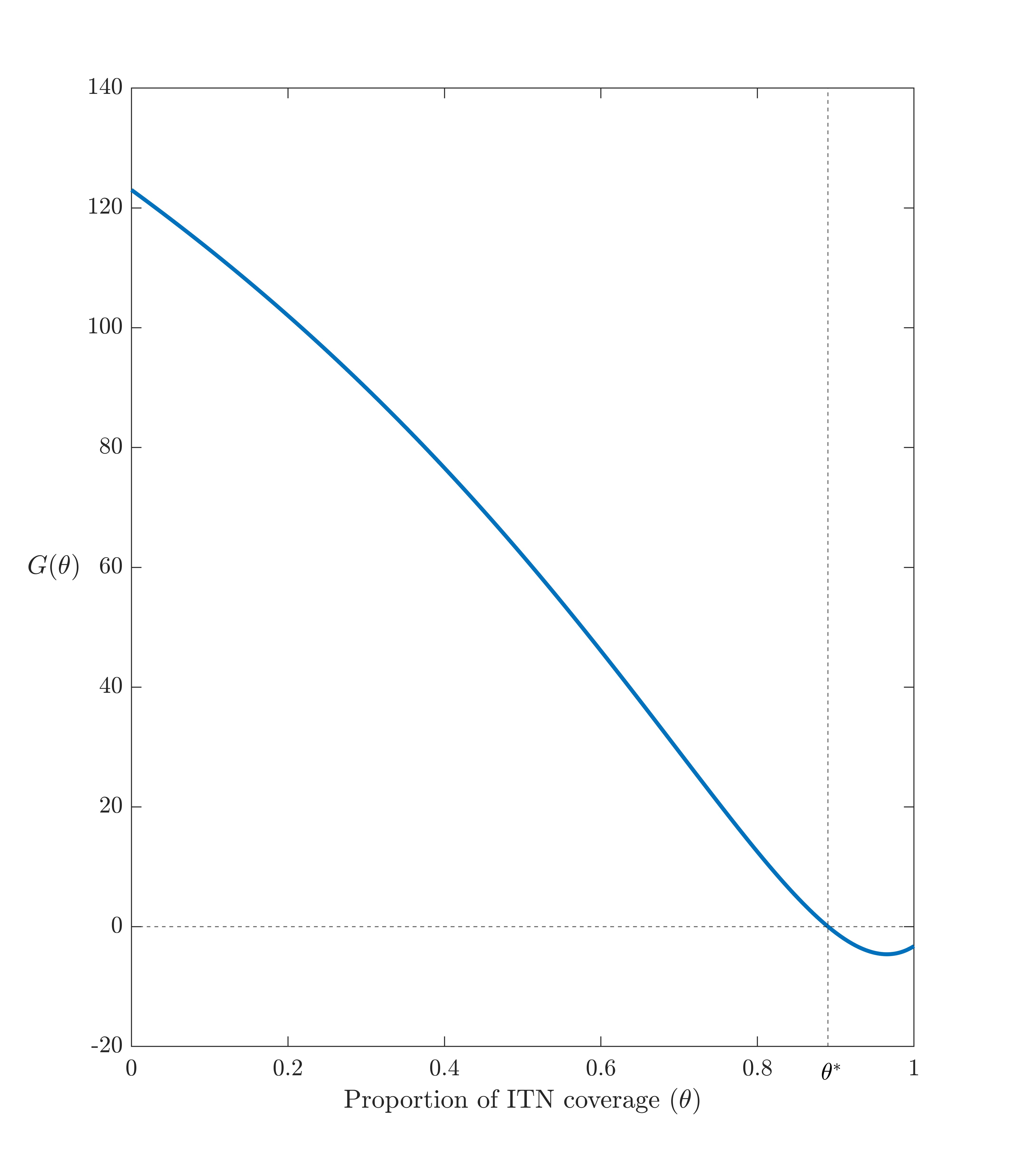}
		\caption{Existence of  positive real root for \eqref{eqntheta:01}. The value of $G(\theta)=0$ at $\theta^*$=0.89,  with $\mu_1=0.42$ and other parameter values same as in Figure \ref{fig:3}.}
	\label{fig:7}
\end{figure}

\end{document}